  \renewcommand{\minted@optlistcl@quote}[2]{%
    \ifstrempty{#2}{\detokenize{#1}}{\detokenize{#1="#2"}}}
  \renewcommand{\minted@optlistcl@quote}[2]{%
    \ifstrempty{#2}{\detokenize{#1}}{\detokenize{#1='#2'}}}
\newcommand{\minted@def@optcl@novalue}[2]{%
  \define@booleankey{minted@opt@g}{#1}%
    {\minted@addto@optlistcl{\minted@optlistcl@g}{#2}{}%
     \@namedef{minted@opt@g:#1}{true}}
    {\@namedef{minted@opt@g:#1}{false}}
  \define@booleankey{minted@opt@g@i}{#1}%
    {\minted@addto@optlistcl{\minted@optlistcl@g@i}{#2}{}%
     \@namedef{minted@opt@g@i:#1}{true}}
    {\@namedef{minted@opt@g@i:#1}{false}}
  \define@booleankey{minted@opt@lang}{#1}%
    {\minted@addto@optlistcl@lang{minted@optlistcl@lang\minted@lang}{#2}{}%
     \@namedef{minted@opt@lang\minted@lang:#1}{true}}
    {\@namedef{minted@opt@lang\minted@lang:#1}{false}}
  \define@booleankey{minted@opt@lang@i}{#1}%
    {\minted@addto@optlistcl@lang{minted@optlistcl@lang\minted@lang @i}{#2}{}%
     \@namedef{minted@opt@lang\minted@lang @i:#1}{true}}
    {\@namedef{minted@opt@lang\minted@lang @i:#1}{false}}
  \define@booleankey{minted@opt@cmd}{#1}%
      {\minted@addto@optlistcl{\minted@optlistcl@cmd}{#2}{}%
        \@namedef{minted@opt@cmd:#1}{true}}
      {\@namedef{minted@opt@cmd:#1}{false}}
}
\def\dontdofcolorbox{\renewcommand\fcolorbox[4][]{##4}}
\newcommand{\leftsquigarrow}{\mathrel{\reflectbox{$\rightsquigarrow$}}}
\newtcolorbox[auto counter,
crefname={diagram}{diagrams},
Crefname={Diagram}{Diagrams}]{diagrambox}[2][]{%
colback=white,colframe=purple!75!black,fonttitle=\bfseries,
title=Diagram~\thetcbcounter: #2,#1}
\NewDocumentEnvironment{diagram}{o m m}{
  \IfNoValueTF{#1}{
    \begin{diagrambox}[float,label=#2]{#3}
  }{
    \begin{diagrambox}[float,floatplacement=#1,label=#2]{#3}
  }
}{
  \end{diagrambox}
}
\title{The Directed Van Kampen Theorem in Lean}
\author{
Henning Basold\thanks{LIACS, Leiden University, \url{mailto:h.basold@liacs.leidenuniv.nl}} \and
Peter Bruin\thanks{Mathematical Institute, Leiden University,
\url{mailto:p.j.bruin@math.leidenuniv.nl}.
Partially supported by the Dutch Research Council (NWO/OCW),
as part of the Quantum Software Consortium programme
(project number 024.003.037).} \and
Dominique Lawson\thanks{Student, Leiden University, \url{mailto:d.r.lawson@umail.leidenuniv.nl}.
  Based on the author's bachelor thesis.}
}
\date{Pre-print}
\begin{document}
\maketitle

\begin{abstract}
  Directed topology is an area of mathematics with applications in concurrency.
  It extends the concept of a topological space by adding a notion of directedness, which restricts
  how paths can evolve through a space and enables thereby a faithful representation of computation
  with their direction.
  In this paper, we present a Lean formalisation of directed spaces and a Van Kampen theorem for
  them.
  This theorem allows the calculation of the homotopy type of a space by combining local knowledge
  the homotopy type of subspaces.
  With this theorem, the reasoning about spaces can be reduced to subspaces and, by representing
  concurrent systems as directed spaces, we can reduce the deduction of properties of a composed
  system to that of subsystems.
  The formalisation in Lean can serve to support computer-assisted reasoning about the behaviour
  of concurrent systems.
\end{abstract}

\section{Introduction}
\label{introduction}

The direction of paths that may pass through a topological space, and thus the behaviour of
dynamical systems in it, are only constrained by the shape of the space.
However, there are often cases when paths must follow a particular direction and are generally
not reversible.
One motivation, stems from models of
true concurrency~\cite{Fajstrup98:DetectingDeadlocksConcurrent}, where executions are
modelled as non-reversible paths in a space.
For instance, two programs A and~B can be executed \emph{sequentially} in two ways: either we
first run A and then B, or vice versa, see \cref{fig:directed-situations} on the left.
This choice between two sequential linearisations corresponds to semantics of labelled transition
systems, but it neglects potential parallel execution.
To see this, suppose that A and B have no dependency or interaction and can be run in parallel.
This situation can be modelled by admitting any path in the square from the bottom left to the top
right as a valid execution, with the intuition that going along the path tracks how far each of the
processes has been run, see b) of \cref{fig:directed-situations}.
The caveat is that processes can, in general, not be reversed and therefore the path may only ever
go up and to the right, thereby following the directions of the arrows.
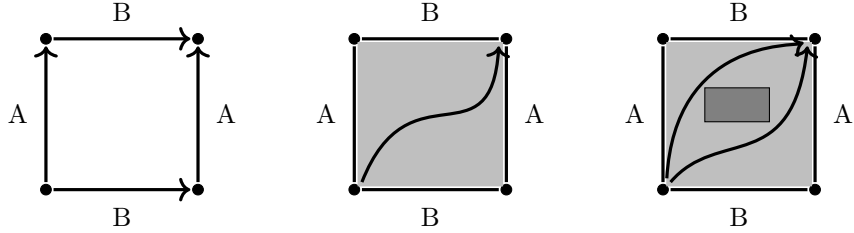
\begin{figure}[ht]
  \centering
  \begin{tikzpicture}
    \filldraw [black] (0,0) circle (2pt);
    \filldraw [black] (0,2) circle (2pt);
    \filldraw [black] (2,0) circle (2pt);
    \filldraw [black] (2,2) circle (2pt);
    \draw[very thick, ->] (0,0.1) -- node[left=1mm] {A} (0, 1.9);
    \draw[very thick, ->] (0.1,0) -- node[below=1mm]{B} (1.9, 0);
    \draw[very thick, ->] (2,0.1) -- node[right=1mm]{A} (2, 1.9);
    \draw[very thick, ->] (0.1,2) -- node[above=1mm]{B} (1.9, 2);
  \end{tikzpicture}
  \qquad
  \begin{tikzpicture}
    \filldraw [black] (0,0) circle (2pt);
    \filldraw [black] (0,2) circle (2pt);
    \filldraw [black] (2,0) circle (2pt);
    \filldraw [black] (2,2) circle (2pt);
    \filldraw [lightgray] (0.05, 0.05) rectangle (1.95, 1.95);
    \draw[very thick] (0,0.1) -- node[left=1mm]  {A} (0, 1.9);
    \draw[very thick] (0.1,0) -- node[below=1mm] {B} (1.9, 0);
    \draw[very thick] (2,0.1) -- node[right=1mm] {A} (2, 1.9);
    \draw[very thick] (0.1,2) -- node[above=1mm] {B} (1.9, 2);
    \draw[very thick, ->] (0.1, 0.1) .. controls (0.72, 1.68) and (1.8, 0.32) .. (1.9, 1.9);
  \end{tikzpicture}
  \qquad
  \begin{tikzpicture}
    \filldraw [black] (0,0) circle (2pt);
    \filldraw [black] (0,2) circle (2pt);
    \filldraw [black] (2,0) circle (2pt);
    \filldraw [black] (2,2) circle (2pt);
    \filldraw [lightgray] (0.05, 0.05) rectangle (1.95, 1.95);
    \draw [fill=gray] (0.55, 0.9) rectangle (1.4, 1.35);
    \draw[very thick] (0,0.1) -- node[left=1mm]  {A} (0, 1.9);
    \draw[very thick] (0.1,0) -- node[below=1mm] {B} (1.9, 0);
    \draw[very thick] (2,0.1) -- node[right=1mm] {A} (2, 1.9);
    \draw[very thick] (0.1,2) -- node[above=1mm] {B} (1.9, 2);
    \draw[very thick, ->] (0.1, 0.1) .. controls (0.72, 0.84) and (1.7, 0.16) .. (1.9, 1.9);
    \draw[very thick, ->] (0.05, 0.15) .. controls (0.16, 2) and (1.7, 1.9) .. (1.85, 1.95);
  \end{tikzpicture}
  \caption{Possible execution paths of two programs A and B under three conditions:
    a)~sequential (left), b) simultaneous (middle) and c) simultaneous with obstacles (right).}
  \label{fig:directed-situations}
\end{figure}
Now suppose that there is a dependency between the processes like, for instance, they need to write
to the same memory location.
In order to prevent race conditions, we can rule out those execution paths in which the processes
access the memory location at the same time.
This, in turn, can be modelled by the space square in c) of \cref{fig:directed-situations},
where the darker rectangle is an obstacle that paths have to bypass.
The two displayed paths in that space represent different access patterns to the memory, in that
the lower path indicates that process B first gets access to the memory location, while the upper
means that A first gets access.
These two paths are essentially different because the observable behaviour of the system differs and
because we cannot change the access pattern during execution.
In contrast, the different paths in the space b) of \cref{fig:directed-situations} model executions
that differ only in the relative execution speeds of A and B but are otherwise equivalent.
By giving one process more execution time, we can always deform one path into another in this space.
Finally, the space in a) has exactly two paths from the bottom left to top right, neither of which can
be deformed to the other during execution due to the absence of parallelism.
This tells us that the spaces in \cref{fig:directed-situations} all model different systems and
the question is then how our intuition about relating execution paths can be made
precise and how we can reason about these relations.

Directed topology and directed homotopy theory~\cite{Fajstrup16:DirectedAlgebraicTopology,%
  Grandis09:DirectedAlgebraicTopology}
make precise the above intuition and enable the analysis of concurrent systems with the tools
of algebraic topology.
There are various ways to enforce direction in topological spaces, such as
higher-dimensional automata~\cite{Glabbeek06:ExpressivenessHDA,%
  Pratt91:ModelingConcurrencyGeometry},
spaces with a global order~\cite{fajstrup2006},
spaces with local orders~\cite{Fajstrup03:DicoveringSpaces},
streams~\cite{Krishnan09:ConvenientCategoryLocally},
and various others~\cite{Dubut17:DirectedHomotopyHomology, Gaucher21:SixModelCategories}.
We will focus here on the notion of d-space~\cite{grandis2003directed}, which represents a
directed space as a topological space with a distinguished set of directed paths.
It then turns out that reasoning about concurrent systems becomes reasoning about the homotopy
type of d-spaces, that is, the relation between directed paths in a d-space.

One of the important steps in building and analysing large systems is to prove local properties of
subsystems and deduce properties of the whole composed system from these local properties.
In algebraic topology, one of the important results to combine knowledge of the homotopy type
of subspaces into knowledge about the whole space is the so-called Van Kampen
theorem~\cite{brown2006Topology}.
This result has been extended to d-spaces by \textcite{grandis2003directed}.
To make the use of this result applicable in larger systems, we set out in this paper to formalise
the van Kampen theorem for d-spaces in the proof assistant Lean~\cite{dKA+15:LeanTheoremProver},
thereby enabling compositional reasoning about the homotopy type of d-spaces and of concurrent
systems modelled as d-spaces.

\subsection{Contributions}

Our main contribution is the formalization of definitions and theorems relating to directed topology,
in particular the Van Kampen Theorem. For this formalization we used Lean 3.50.3 and we built upon
the work already present in MathLib~\cite{ThemathlibCommunity20:LeanMathematicalLibrary}.
All of the formalization can be found in the accompanying Git
repository~\cite{Lawson23:GitHubDominiqueLawsonDirectedTopology}.

We also analyze the conditions of the Van Kampen Theorem and exhibit an example showing that
the directed Van Kampen Theorem does not directly generalize to fundamental monoids.

\subsection{Overview}

In \cref{Ch Directed Spaces}, we define the notion of directed spaces and directed maps and give a few examples.
In \cref{Ch Directed Homotopies}, the definitions and some properties of directed homotopies
and directed path homotopies are given.
We use those to define relations on the set of directed paths between two points.
In \cref{Ch Fundamental Structures}, the equivalence classes of paths under these relations
are used to define the fundamental category and the fundamental monoid.
The Van Kampen Theorem is stated in \cref{Ch Van Kampen Theorem} and an application is given.
Finally, in \cref{Ch Conclusion} we reflect on the ideas presented in this article
and give some suggestions for further research.

Small excerpts from the Lean formalization can be found throughout the article. 
They can be recognized by the \verb|monospace| font used.
The surrounding text explains the ideas of the formalization and references the corresponding file in which the code can be found.

\section{Directed Spaces}\label{Ch Directed Spaces}

In this section, we will look at the basic structure of a directed space.
With directed maps as morphisms, the category of directed spaces \textbf{dTop} is obtained.

\subsection{Directed Spaces}

A directed space is a topological space with a distinguished set of paths,
whose elements are called directed paths.
This is analogous to the set of open sets in a topological spaces.
Similarly, the set of directed paths must satisfy some properties.
First, constant paths must be directed.
Secondly, if there are two directed paths that connect in an end and start point,
the concatenation of those two paths should again be directed.
Lastly, it should be possible to follow a part of a directed path at a different speed,
as long as the direction does not reverse anywhere.
This can be captured in the property that monotone subparametrizations of directed paths must also be directed paths.

\begin{definition}[Directed space]
    A directed space is a topological space $X$, together with a set of paths in $X$, denoted $P_X$.
    That set must satisfy the following three properties:
\begin{enumerate}
  \item For any point $x \in X$, we have that $0_x \in P_X$, where $0_x$ is the constant path in $x$.
  \item For any two paths $\gamma_1, \gamma_2 \in P_X$ with $\gamma_1(1) = \gamma_2(0)$,
            we have that $\gamma_1 \odot \gamma_2 \in P_X$.
  \item For any path $\gamma \in P_X$ and any continuous, monotone map $\varphi : [0,1] \to [0,1]$,
            we have that $\gamma \circ \varphi \in P_X$.
\end{enumerate}
The elements of $P_X$ are called \textit{directed paths} or \textit{dipaths}.
\end{definition} 

We will first consider some examples of directed spaces.

\begin{example}[Directed unit interval]\label{Directed unit interval}
    We can give the unit interval a rightward direction.
    This is done by taking
        $P_{[0,1]} = \{ \varphi : [0,1] \to [0,1] \mid \varphi \text{ continuous and monotone} \}$.
    We will denote this directed space with $I$.
    More generally, every (pre)ordered space can be given a set of directed paths this way.
\end{example}

\begin{example}[Directed unit circle]
    One of the ways the unit circle $S^1 = \{ z \in \mathbb{C} \mid |z| = 1 \}$
    can be made into a directed space is by only allowing paths that go monotonously counterclockwise.
    Specifically, we take the set of directed paths
        $P_{S^1} = \{t \mapsto \exp(i \varphi(t)) \mid \varphi : [0,1] \to \mathbb{R} \text{ continuous and monotone}\}$.
    This directed space will be denoted with $S^1_{+}$.
\end{example}

\begin{example}[Maximal directed space]
    Any topological space $X$ can be made into a directed space by taking $P_X$ as the set of all paths in $X$.
    We will call this the maximal directedness on $X$.
    This is also sometimes called the indiscrete or natural directedness.
\end{example}

\begin{example}[Minimal directed space]
    Any topological space $X$ can be made into a directed space by taking $P_X = \{ 0_x \mid x \in X \}$.
    In other words, only the constant paths are directed paths.
    We will call this the minimal directedness on $X$.
    This is also sometimes called the discrete directedness.
\end{example}

\begin{example}[Product of directed spaces]\label{Product Spaces}
    If $(X, P_X)$ and $(Y, P_Y)$ are two directed spaces,
    then the space $X \times Y$ with the product topology can be made into a directed space
    by letting $P_{X \times Y} = \{ t \mapsto (\gamma_1(t), \gamma_2(t)) \mid \gamma_1 \in P_X \text{ and } \gamma_2 \in P_Y \}$.
    As we will see in \cref{Directed Maps},
    with this set of directed paths both projection maps will be examples of directed maps and
    $(X \times Y, P_{X \times Y})$ becomes a product in a categorical sense.
\end{example}

\begin{example}[Induced directed space]\label{Induced Directed Spaces}
    Let $X$ be a topological space and $(Y, P_Y)$ a directed space.
    Let a continuous map $f: X \to Y$ be given.
    If $\gamma : [0,1] \to X$ is a path in $X$, then $f \circ \gamma : [0,1] \to Y$ is a path in $Y$.
    We can make $X$ into a directed space by taking
        $P_X = \{ \gamma \in C([0,1], X) \mid f \circ \gamma \in P_Y \}$.
    It is not hard to verify that this satisfies all the properties of a directed space.
    In the special case that $X$ is a subspace of $Y$ and $f$ is the inclusion map,
    we find that every subspace of a directed space can be given a natural directedness.
\end{example}

We formalized the notion of a directed space by extending the \verb|topological_space| class.
In our formalization, we do not explicitly use a set containing paths.
Rather, being a directed path is a property of a path itself,
analogous to how being open is a property of a set in the \verb|topological_space| class.
Paths in topological spaces have been implemented in MathLib in the file
\href{https://github.com/leanprover-community/mathlib/blob/97eab48/src/topology/path_connected.lean}{\texttt{topology/path\_connected.lean}}.
A path has type \verb|path x y|, where its starting point is \verb|x| and its endpoint is \verb|y|.
The definition of a directed space can be found in \leanref{directed\_space.lean}.

\vspace{3mm}
\begin{leancode}
class directed_space (α : Type u) extends topological_space α :=
  (is_dipath : ∀ {x y}, path x y → Prop)
  (is_dipath_constant : ∀ (x : α), is_dipath (path.refl x))
  (is_dipath_concat : ∀ {x y z} {γ₁ : path x y} {γ₂ : path y z},
        is_dipath γ₁ → is_dipath γ₂ → is_dipath (path.trans γ₁ γ₂))
  (is_dipath_reparam : ∀ {x y : α} {γ : path x y} {t₀ t₁ : I} {f : path t₀ t₁},
        monotone f → is_dipath γ → is_dipath (f.map γ.continuous_to_fun))
\end{leancode}
\vspace{3mm}

The term \verb|is_dipath| determines whether a path is a directed path or not.
The three other terms are exactly the three properties of a directed space.
\verb!path.refl x! is the constant path in a point \verb!x! and \verb!path.trans!
is used for the concatenation of paths.
MathLib only has support for reparametrizations of paths (meaning that the endpoints must remain the same),
but we want to also allow strict subparametrizations.
We do this by interpreting the subparametrization $f$ as a monotone path in $[0,1]$.
Then the path $\gamma \circ f$ can be obtained using \verb!path.map!,
where we interpret $\gamma$ as a continuous map.

In \verb|constructions.lean|, different instances of directed spaces can be found:
topological spaces with a preorder (\cref{Directed unit interval}),
products of directed spaces (\cref{Product Spaces}) and
induced directedness (\cref{Induced Directed Spaces}).

For brevity, we introduce a notation for the set of all paths between $x$ and $y$.

\begin{definition}
    If $X$ is a directed space and $x, y \in X$ points,
    we use the shorthand notation $P_X(x, y)$ for the set
    $\{ \gamma \in P_X \mid \gamma(0) = x \text{ and } \gamma(1) = y \}$.
\end{definition}

This definition can also be seen as a type for our formalization.
That is exactly how to interpret the structure \verb|dipath|, found in
\leanref{dipath.lean}:

\vspace{3mm}
\begin{leancode}
    variables {X : Type u} [directed_space X]
    structure dipath (x y : X) extends path x y :=
        (dipath_to_path : is_dipath to_path)
\end{leancode}
\vspace{3mm}

It extends the \verb|path| structure and depends on two points \verb|x| and \verb|y|
in a directed space \verb|X|.
The term \verb|dipath_to_path| has type \verb|is_dipath to_path|.
That means that the underlying path it extends must be a directed path.
Due to the axioms of a directed space,
we can define \verb|dipath.refl| and \verb|dipath.trans| analogous to their path-counterparts.
However, \verb|path.symm|, the reversal of a path,
cannot be converted to a directed variant
as it is not guaranteed that the reversal of a directed path is directed.

We introduce a notation for a special kind of subpath of a directed path.

\begin{definition}\label{Not path split part}
    Let $X$ be a directed space and $\gamma \in P_X$ a directed path.
    If $n > 0$ and $1 \le i \le n$ we will define $\gamma_{i, n} \in P_X$ to be the path
    from $\gamma(\tfrac{i-1}{n})$ to $\gamma(\tfrac{i}{n})$
    given by $\gamma_{i, n}(t) = \gamma(\tfrac{i+t-1}{n})$.
\end{definition}

We can now say what it means for a directed path to be covered by a cover of a directed space.
This definition will play a big role in proving and formalizing the Van Kampen Theorem.

\begin{definition}\label{Def path cover}
    Let $X$ be a directed space and $\mathcal{U}$ a cover of $X$.
    Let $\gamma \in P_X$ be a directed path and $n > 0$ an integer.
    We say that $\gamma$ is $n$-covered (by $\mathcal{U}$) if we have for all $1 \le i \le n$ that
    $\text{Im } \gamma_{i, n} \subseteq U$ for some $U \in \mathcal{U}$.
    In the special case that $n = 1$, we simply say that $\gamma$ is covered by $U$,
    where $\text{Im } \gamma \subseteq U$.
\end{definition}

In \leanref{path\_cover.lean} we formalize this definition of $n$-covered in the special case
that $\mathcal{U}$ consists of two elements $X_0$ and $X_1$:
\vspace{3mm}
\begin{leancode}
def covered (γ : dipath x₀ x₁) (hX : X₀ ∪ X₁ = univ) : Prop :=
    (range γ ⊆ X₀) ∨ (range γ ⊆ X₁)

def covered_partwise (hX : X₀ ∪ X₁ = set.univ) :
    Π {x y : X}, dipath x y → ℕ → Prop
| x y γ 0 := covered γ hX
| x y γ (nat.succ n) :=
    covered (split_dipath.first_part_dipath γ
        (inv_I_pos (show 0 < (n.succ + 1), by norm_num))) hX ∧
    covered_partwise (split_dipath.second_part_dipath γ
        (inv_I_lt_one (show 1 < (n.succ + 1), by norm_num))) n
\end{leancode}
\vspace{3mm}
Here \verb|covered| corresponds with $\gamma$ being $1$-covered:
its image is either contained in $X_0$ or in~$X_1$.
We use this definition to inductively define \verb|covered_partwise|.
As it is easier to start at zero in Lean,
\verb|covered_partwise hX γ n| corresponds with $\gamma$ being $(n+1)$-covered.
In the case that $n = 0$, we have that \verb|covered_partwise| simply agrees with \verb|covered hX γ|.
Otherwise, we use an induction step to define that \verb|covered_partwise hX γ (nat.succ n)| holds
if the first part $\gamma_{1, {n+2}}$ is \verb|covered|
and the remainder of $\gamma$ is \verb|covered_partwise hX γ n|.
Note the use of $n + 2$ instead of $n+1$ due to the offset between the definitions.
The remainder of \leanref{path\_cover.lean} contains lemmas about conditions for being $n$-covered.

\subsection{Directed Maps}\label{Directed Maps}

As directed spaces are an extension of topological spaces,
directed maps will be extensions of continuous maps.
They will need to respect the extra directed structure.
If a path in the domain space is given,
a path in the codomain space can be obtained by composing the continuous map with the path. 
If the former is directed, so should be the latter. 

\begin{definition}[Directed map] Let $X$ and $Y$ be two directed spaces.
    A directed map $f : X \to Y$ is a continuous map on the underlying topological spaces
    that furthermore satisfies: for any $\gamma \in P_X$, we have that $f \circ \gamma \in P_Y$.
\end{definition}

\begin{example}
    The map $f : I \to S^1_+$ given by $t \mapsto e^{it}$ is a directed map.
    First, it is continuous on the underlying
    topological spaces.
    Secondly, if $\gamma \in P_I$ is a directed path, that is, continuous and monotone,
    then $f \circ \gamma$ is given by $t \mapsto \exp(i \gamma(t))$ and
    that path is by definition of $P_{S^1_+}$ directed.
\end{example}

Any continuous map from a minimally directed space to a directed space is directed.
Similarly, any continuous map to a maximally directed space is directed.
By construction of the product of directed spaces
the continuous projection maps on both coordinates are directed:
a directed path in the product space is a pair of directed paths
and a projection returns the original directed path.
Similarly, if a continuous map $f : X \to Y$ is used to induce a direction on $X$
as in \cref{Induced Directed Spaces},
then $f$ becomes a directed map from $X$ to $Y$, where $X$ has the induced directedness.

In order to formalize the definition of a directed map in Lean,
we define the property \verb!directed_map.directed!,
which expresses exactly that a continuous map between two directed spaces
maps directed paths to directed paths:

\vspace{3mm}
\begin{leancode}
variables {α β : Type*} [directed_space α] [directed_space β]
def directed (f : C(α, β)) : Prop :=
  ∀ ⦃x y : α⦄ (γ : path x y),
        is_dipath γ → is_dipath (γ.map f.continuous_to_fun)
\end{leancode}
\vspace{3mm}

A directed map is then an extension of the \verb|continuous_map| structure
with a proof for being \verb|directed|.

\vspace{3mm}
\begin{leancode}
structure directed_map (α β : Type*) [directed_space α] [directed_space β]
  extends continuous_map α β :=
    (directed_to_fun : directed_map.directed to_continuous_map)
\end{leancode}
\vspace{3mm}

Within Lean, we use the notation $D(\alpha, \beta)$
for the type of directed maps between two spaces $\alpha$ and $\beta$.
Directed paths are also instances of directed maps,
because they map directed paths in~$I$ to monotone subparametrization of themselves.
\leanref{dipath.lean} contains definitions on how to convert the \verb|dipath| type
to the \verb|directed_map| type and the other way around.
These are called \verb|to_directed_map| and \verb|of_directed_map| respectively.

Directed spaces and directed maps form a category, which we will denote with \textbf{dTop}.
There are two functors $Min, Max : \textbf{Top} \to \textbf{dTop}$,
where $Min$ equips a topological space with the minimal directedness
and $Max$ equips a topological space with the maximal directedness.
If $U : \textbf{dTop} \to \textbf{Top}$ is the forgetful functor
that sends a directed space to its underlying topological space,
we obtain two adjunctions $Min \dashv U \dashv Max$ \cite{grandis2003directed}.

Within \textbf{dTop} we find an instance of pushouts as the following lemma shows.

\begin{lemma}\label{dTop pushout}
    Let $X \in \textbf{dTop}$ be a directed space
    and $X_1$ and $X_2$ two open subspaces such that $X = X_1 \cup X_2$.
    Take $X_0 = X_1 \cap X_2$ as the intersection of $X_1$ and $X_2$.
    Let $i_k : X_0 \to X_k$ and $j_k : X_k \to X$, with $k \in \{1, 2 \}$ be the inclusion maps.
    We then get a pushout square in \textbf{dTop}:
    \begin{center}
        \begin{tikzpicture}
        \node at (0,3) {$X_0$};
        \node at (3,3) {$X_1$};
        \node at (0,0) {$X_2$};
        \node at (3,0) {$X$};
        \draw[thick, ->] (0.3, 3) -- node[above] {$i_1$} (2.7, 3);
        \draw[thick, ->] (0, 2.7) -- node[left]  {$i_2$} (0, 0.3);
        \draw[thick, ->] (3, 2.7) -- node[right] {$j_1$} (3, 0.3);
        \draw[thick, ->] (0.3, 0) -- node[below] {$j_2$} (2.7, 0);
        \end{tikzpicture}
    \end{center}
\end{lemma}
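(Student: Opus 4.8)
The plan is to verify the universal property of the pushout directly, since $\mathbf{dTop}$ does not obviously come equipped with colimits and the directedness on $X$ is already fixed by hypothesis. First note that the square commutes on the nose: $i_1,i_2,j_1,j_2$ are inclusions and $j_1\circ i_1 = j_2\circ i_2$ is just the inclusion $X_0\hookrightarrow X$. Now fix a directed space $Y$ together with directed maps $f_1\colon X_1\to Y$ and $f_2\colon X_2\to Y$ with $f_1\circ i_1 = f_2\circ i_2$. Because $X = X_1\cup X_2$ and $f_1,f_2$ agree on $X_0 = X_1\cap X_2$, there is a unique set map $f\colon X\to Y$ with $f\circ j_1 = f_1$ and $f\circ j_2 = f_2$; uniqueness of $f$ among all maps (hence among directed maps) with this property is immediate, as $X_1$ and $X_2$ already cover $X$. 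Since $X_1$ and $X_2$ are open and $f$ restricts on them to the continuous maps $f_1$ and $f_2$, the map $f$ is continuous — equivalently, $X$ is the pushout $X_1\sqcup_{X_0}X_2$ already in $\mathbf{Top}$, which is the standard description of a space glued from an open cover. It remains to show that $f$ is a directed map.

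So let $\gamma\in P_X$ be a directed path; we must show $f\circ\gamma\in P_Y$. The sets $\gamma^{-1}(X_1)$ and $\gamma^{-1}(X_2)$ form an open cover of the compact metric space $[0,1]$, so by the Lebesgue number lemma there is an integer $n>0$ such that $\gamma$ is $n$-covered by $\{X_1,X_2\}$ in the sense of \cref{Def path cover}: for each $1\le i\le n$ the subpath $\gamma_{i,n}$ has image inside $X_1$ or inside $X_2$. Each $\gamma_{i,n}$ is itself a directed path in $X$, being the composite of $\gamma$ with the continuous monotone map $t\mapsto\tfrac{i+t-1}{n}$ (third axiom of a directed space). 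If $\operatorname{Im}\gamma_{i,n}\subseteq X_k$, then $\gamma_{i,n}$ corestricts to a path $\tilde\gamma_{i,n}$ in $X_k$ whose composite with $j_k$ is the directed path $\gamma_{i,n}$, so by the definition of the induced directedness on the subspace $X_k$ (\cref{Induced Directed Spaces}) we have $\tilde\gamma_{i,n}\in P_{X_k}$. Since $f_k$ is directed, $f_k\circ\tilde\gamma_{i,n}\in P_Y$, and this path equals $f\circ j_k\circ\tilde\gamma_{i,n} = f\circ\gamma_{i,n}$. Hence $f\circ\gamma_{i,n}\in P_Y$ for every $i$.

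Finally, $\gamma$ is, up to a monotone reparametrization of $[0,1]$, the iterated concatenation $\gamma_{1,n}\odot\gamma_{2,n}\odot\cdots\odot\gamma_{n,n}$ (the equal-length concatenation of the $\gamma_{i,n}$ is literally $\gamma$, and that is in turn a monotone reparametrization of the iterated $\odot$). Applying $f$ commutes with both concatenation and reparametrization, so $f\circ\gamma$ is a monotone reparametrization of an iterated concatenation of the paths $f\circ\gamma_{i,n}\in P_Y$; by the second and third axioms for the directed space $Y$ this lies in $P_Y$. Thus $f$ is directed, which completes the verification of the universal property. I expect the main work to be exactly this subdivision step: converting the Lebesgue-number argument into the inductive statement that $\gamma$ is $n$-covered (this is where openness of $X_1,X_2$ is genuinely used), and then bookkeeping the reparametrizations carefully enough that the recombined path provably lands in $P_Y$ — in the formalisation this is organised around the \verb|covered| / \verb|covered_partwise| predicates and the \verb|split_dipath| lemmas of \leanref{path\_cover.lean}.
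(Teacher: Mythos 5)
Your proposal is correct and follows essentially the same route as the paper's proof: paste $f$ together set-theoretically (which also gives uniqueness), observe continuity from the open cover, and establish directedness by applying the Lebesgue Number Lemma to get an $n$-covering of a directed path, mapping each subpath $\gamma_{i,n}$ through the appropriate $f_k$, and recombining via concatenation together with a bijective monotone reparametrization. The only cosmetic difference is that you explicitly invoke the induced (subspace) directedness to justify that each corestricted $\gamma_{i,n}$ is directed in $X_k$, a step the paper leaves implicit.
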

\begin{proof}
    Let $Y \in \textbf{dTop}$ be another directed space and
    $f_1 : X_1 \to Y$ and $f_2 : X_2 \to Y$ two directed maps such that $f_1 \circ i_1 = f_2 \circ i_2$.
    We will now construct an unique directed map $f : X \to Y$ such that $f \circ j_1 = f_1$ and $f \circ j_2 = f_2$.
    Note that $f \circ j_k$ as a map is simply the restriction of $f$ to $X_k$.
    As $X$ is covered by $X_1$ and $X_2$, it follows that $f$ needs to be defined by
    \begin{gather*}
        f(x) = \begin{cases}
            f_1(x), & x \in X_1, \\
            f_2(x), & x \in X_2.
        \end{cases}
    \end{gather*}
    This already gives us uniqueness.
    If $U \subseteq Y$ is open,
    then $f^{-1}(U) = f_1^{-1}(U) \cup f_2^{-1}(U)$ is open as a union of open subsets
    and thus $f$ is continuous.
    In order to see that $f$ is directed,
    we need to use the Lebesgue Number Lemma \cite[p. 179-180]{munkres1975topology}.
    We will use it to cut up a path into pieces
    and then apply the two maps $f_1$ and $f_2$ independently and concatenate the results together.

    Let $\gamma \in P_X$ be any directed path.
    We have that $[0,1] = \gamma^{-1}(X) = \gamma^{-1}(X_1) \cup \gamma^{-1}(X_2)$,
    so $\gamma^{-1}(X_1)$ and $\gamma^{-1}(X_2)$ form an open cover of $[0,1]$.
    By applying the Lebesgue Number Lemma we can find an integer $n > 0$
    such that for all $1 \le i \le n$ we have that
    $[\tfrac{i}{n},\tfrac{i+1}{n}] \subseteq \gamma^{-1}(X_{k_i})$ with $k_i$ either $1$ or $2$,
    i.e. $\gamma$ is $n$-covered.
    With a suitable bijective and monotone reparametrization $\varphi : [0, 1] \to [0,1]$,
    we have that
    \begin{gather*}
        \gamma \circ \varphi =
        \gamma_{1, n} \odot (\gamma_{2, n} \odot \ldots (\gamma_{n-1, n} \odot \gamma_{n, n})).
    \end{gather*}
    Note that each of the paths $\gamma_{i, n}$ is directed
    as they are monotone subparametrizations of $\gamma$.
    We obtain:
    \begin{gather*}
        (f \circ \gamma) \circ \varphi =
        f \circ (\gamma \circ \varphi) =
        f \circ \left( \gamma_{1, n} \odot (\gamma_{2, n} \odot \ldots (\gamma_{n-1, n} \odot \gamma_{n, n})) \right) = \\
        (f_{k_1} \circ \gamma_{1, n}) \odot ((f_{k_2} \circ \gamma_{2, n}) \odot \ldots ((f_{k_{n-1}} \circ \gamma_{n-1, n}) \odot (f_{k_n} \circ \gamma_{n, n})))
    \end{gather*}
    The directedness of the maps $f_1$ and $f_2$ tells us
    that each of the paths $f_{k_i} \circ \gamma_{i, n}$ is directed.
    By the property of concatenation of directed paths we find
    that $(f \circ \gamma) \circ \varphi$ is directed.
    As $\varphi$ is a bijective monotone reparametrization, so is its inverse.
    This gives us that
        $f \circ \gamma = (f \circ \gamma) \circ \varphi \circ \varphi^{-1}$
    is directed, so $f$ is a directed map.
    From this, it follows that the above square is indeed a pushout square.
\end{proof}

From proof of the lemma, it follows that within the category \textbf{Top} a topological space $X$
covered by two open sets $X_1$ and $X_2$ also form a pushout square.
In the case that $X_1$ and $X_2$ are both closed,
they still form a pushout in \textbf{Top} but that is not guaranteed in \textbf{dTop}.
The latter is shown by the following counterexample.

\begin{example}\label{dTop pushout counterexample}
    Take $X = [0,1]$ (maximally directed),
    $X_1 = \{0\} \cup \left( \bigcup_{i = 0}^\infty \left[\tfrac{1}{2i+2}, \tfrac{1}{2i+1} \right] \right)$
    and
    $X_2 = \{0\} \cup \left( \bigcup_{i = 1}^\infty \left[\tfrac{1}{2i+1}, \tfrac{1}{2i} \right] \right)$.
    We have that
        $X_1 \cap X_2 = \{ 0 \} \cup \{ \tfrac{1}{n} \mid n \in \mathbb{Z}_{>1} \}$
    and
        $X_1 \cup X_2 = X$.
    Let $Y = [0, 1]$ with directed paths given by
        $P_Y = \{0_0\} \cup \{ \gamma : [0, 1] \to (0, 1] \mid \gamma \text{ continuous} \}$.
    The directed paths in $Y$ are thus paths contained in $(0, 1]$ and the constant path in $0$.
    Note that this collection does indeed satisfy the three properties of a directed space.

    The point $0$ in $X_1$ is not connected by any paths to any other points,
    so we claim that there is a directed map $f_1 : X_1 \to Y$ given by $f_1(x) = x$:
    we have that for any $\gamma \in P_{X_1}$ that $\gamma = 0_0$
    or $\gamma$ is contained in $\left[\tfrac{1}{2i+1}, \tfrac{1}{2i+2} \right]$ for some $i \geq 0$
    and thus in $(0, 1]$.
    In both cases, $f_1 \circ \gamma$ is directed in $Y$, making $f_1$ a directed map.
    Similarly we have a directed map $f_2 : X_2 \to Y$ given by $f_2(x) = x$.
    
    If $i_1 : X_1 \cap X_2 \to X_1$ and $i_2 : X_1 \cap X_2 \to X_2$ are the inclusion maps,
    then $(f_1 \circ i_1)(x) = x = (f_2 \circ i_2)(x)$ for all $x \in X_1 \cap X_2$,
    so $f_1 \circ i_1 = f_2 \circ i_2$ holds.
    If $X$ were the pushout of $X_1$ and $X_2$,
    there would have to be a unique directed map $f : X \to Y$
    such that $f\circ j_1 = f_1$ and $f \circ j_2 = f_2$.
    Clearly $f$ must be defined as in the proof of \cref{dTop pushout}:
    \begin{gather*}
        f(x) = \begin{cases}
            f_1(x), & x \in X_1, \\
            f_2(x), & x \in X_2,
        \end{cases} = \begin{cases}
            x, & x \in X_1, \\
            x, & x \in X_2,
        \end{cases} = x.
    \end{gather*}
    If $\gamma \in P_X$ is the directed path given by $\gamma(t) = t$,
    we would require that $f \circ \gamma \in P_Y$.
    As $f \circ \gamma$ is neither the constant path $0_0$ nor a path in $(0, 1]$,
    we find a contradiction with the directedness of~$f$.
    Therefore $X$ is not the pushout of $X_1$ and $X_2$, even though $X_1$ and $X_2$ are both closed.
\end{example}

From this example, it also follows that $Max$ does not preserve colimits,
as a pushout is an instance of a colimit.

\section{Directed Homotopies}\label{Ch Directed Homotopies}

In this section, we will look at directed homotopies and directed path homotopies.
These two concepts realize the idea of deformation,
while respecting the directedness of a directed space. 

\subsection{Homotopies} \label{Homotopies}

A directed homotopy is the deformation of one directed map into another.

\begin{definition}[Directed homotopy] Let $X$ and $Y$ be two directed spaces.
    A homotopy between two directed maps $f, g : X \to Y$ is a directed map $H : I \times X \to Y$
    such that for all $x \in X$ we have that $H(0, x) = f(x)$ and $H(1, x) = g(x)$,
    where the product $I \times X$ is taken between directed spaces, see \cref{Product Spaces}.
\end{definition}

Note that $I \times X$ has the product directedness.
We say that $H$ is a directed homotopy from $f$ to~$g$.
This order matters,
as unlike in the undirected case a directed homotopy cannot generally be reversed.
In our formalization,
we adhere to the method used in defining homotopies between continuous maps in MathLib, which can be found in
\href{https://github.com/leanprover-community/mathlib/blob/10bf4f8/src/topology/homotopy/basic.lean}{\texttt{topology/homotopy/basic.lean}}.
In an analogous manner,
the structure extends the \verb|directed_map (I × X) Y| structure and has two extra properties.

\vspace{3mm}
\begin{leancode}
structure dihomotopy (f₀ f₁ : directed_map X Y) extends D((I × X), Y) :=
  (map_zero_left' : ∀ x, to_fun (0, x) = f₀.to_fun x)
  (map_one_left' : ∀ x, to_fun (1, x) = f₁.to_fun x)
\end{leancode}
\vspace{3mm}

As a directed map is always a continuous map on the underlying topological spaces,
we can define how to convert a \verb|dihomotopy| to a \verb|homotopy|.
Conversely, if we are given a \verb|homotopy| and we know that it is directed,
we can obtain a \verb|dihomotopy|.

If $f : X \to Y$ is a directed map,
there is an identity homotopy $H$ from $f$ to $f$, given by $H(t, x) = f(x)$.
Also, if $G$ is a directed homotopy from $f$ to $g$ and $H$ a directed homotopy from $g$ to $h$,
we obtain a directed homotopy $G \otimes H$ from $f$ to $h$ given by
\begin{align*}
    (G \otimes H)(t, x) = \begin{cases}
        G(2t, x), & t \le \tfrac{1}{2}, \\
        H(2t - 1, x), & \tfrac{1}{2} < t.
    \end{cases}
\end{align*}

These constructions are called \verb|refl| and \verb|trans| in \leanref{directed\_homotopy.lean}.
In both cases the coercion of a \verb|homotopy| into a \verb|dihomotopy| is used,
by supplying the proofs that the obtained homotopies are directed.
Here we use the fact that MathLib already contains proofs that the constructed maps
are indeed homotopies, i.e. continuous and satisfying the two mapping properties.

\subsection{Path Homotopies}\label{Sch Path Homotopies}

\begin{definition}[Directed path homotopy]
    Let $X$ be a directed space and $x, y \in X$  two points.
    A directed path homotopy between two directed paths $\gamma_1, \gamma_2 \in P_X(x, y)$
    is a directed homotopy $H : I \times I \to X$ from $\gamma_1$ to $\gamma_2$
    such that additionally for all $t \in [0,1]$ we have that $H(t, 0) = x$ and $H(t, 1) = y$.
\end{definition}

In other words, a path homotopy is a homotopy between two paths that keeps both endpoints fixed.
Again we say that $H$ is a directed path homotopy from $\gamma_1$ to $\gamma_2$.
Between two paths $\gamma_1$ and~$\gamma_2$ in $I$ with the same endpoints
exists a path homotopy under the condition
that $\gamma_1(t) \le \gamma_2(t)$ for all $t \in I$ as the following example shows.

\begin{example}\label{Interpolate Path Homotopy}
    Let $t_0, t_1 \in I$ be two points and $\gamma_1, \gamma_2 \in P_I(t_0, t_1)$.
    If $\gamma_1(t) \le \gamma_2(t)$ for all $t \in I$,
    then there is a directed path homotopy $H$ from $\gamma_1$ to $\gamma_2$ given by
      $H(t, s) = (1 - t) \cdot \gamma_1(s) + t \cdot \gamma_2(s)$.
    It is continuous by continuity of paths, multiplication and addition.
    It can be shown that $H(a_0, b_0) \le H(a_1, b_1)$ if $a_0 \le a_1$ and $b_0 \le b_1$.
    From this, it follows that $H$ is directed,
    because a directed path in $I \times I$ is exactly a pair of monotone maps $I \to I$ by definition.
    
    Note that $H$ interpolates two paths $\gamma_1$ and $\gamma_2$.
    The formalized proof of it being a directed map can be found in the file \leanref{interpolate.lean}.
\end{example}

Let $x, y, z \in X$ be three points,
$\beta_1, \gamma_1 \in P_X(x, y)$ and $\beta_2, \gamma_2 \in P_X(y, z)$.
If there are two directed path homotopies
$G$ from $\beta_1$ to $\gamma_1$ and $H$ from $\beta_2$ to $\gamma_2$,
we can construct a directed path homotopy
$G \odot H$ from $\beta_1 \odot \beta_2$ to $\gamma_1 \odot \gamma_2$ given by
\begin{gather*}
    (G \odot H)(t, s) = \begin{cases}
        G(t, 2s), & s \le \tfrac{1}{2}, \\
        H(t, 2s - 1), & \tfrac{1}{2} < s.
    \end{cases}
\end{gather*}

Let $x, y \in X$ be two points and $\gamma_1, \gamma_2 \in P_X(x, y)$.
If there exists a path homotopy from $\gamma_1$ to~$\gamma_2$,
we will write $\gamma_1 \rightsquigarrow \gamma_2$.
This defines a relation on the set $P_X(x, y)$,
but that relation is not guaranteed to be an equivalence relation, as it is generally not symmetric.
This is due to the fact that the reversal of a directed path may not be directed.
The following lemma shows this.

\begin{lemma}
    Let $X$ be a directed space and $x, y \in X$.
    The relation $\rightsquigarrow$ on $P_X(x, y)$ is reflexive and transitive,
    but it is not always symmetric.
\end{lemma}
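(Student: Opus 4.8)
The plan is to treat reflexivity and transitivity with the two homotopy constructions already set up in \cref{Homotopies}, and to establish the failure of symmetry with an explicit counterexample inside the directed unit interval $I$ of \cref{Directed unit interval}.

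For reflexivity, given $\gamma \in P_X(x,y)$ I would use the constant-in-$t$ homotopy $H(t,s) = \gamma(s)$. This is the composite of the second projection $\pi : I \times I \to I$ with $\gamma$ regarded as a directed map $I \to X$; since projections out of a product of directed spaces are directed (\cref{Product Spaces}) and directed paths are directed maps, $H$ is a directed map. It satisfies $H(0,s) = H(1,s) = \gamma(s)$ and $H(t,0) = x$, $H(t,1) = y$, so it is a directed path homotopy from $\gamma$ to $\gamma$. For transitivity, suppose $G$ is a directed path homotopy from $\gamma_1$ to $\gamma_2$ and $H$ one from $\gamma_2$ to $\gamma_3$. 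I would glue them along the homotopy parameter exactly as in the $G \otimes H$ construction of \cref{Homotopies}:
\[
  K(t,s) = \begin{cases} G(2t,s), & t \le \tfrac12,\\ H(2t-1,s), & t > \tfrac12. \end{cases}
\]
At $t = \tfrac12$ both branches give $\gamma_2(s)$, so $K$ is well defined and continuous by the pasting lemma, and it is directed by the same argument that shows $G \otimes H$ is directed. The endpoint conditions $K(t,0) = x$ and $K(t,1) = y$ are inherited from $G$ and $H$, while $K(0,s) = \gamma_1(s)$ and $K(1,s) = \gamma_3(s)$, so $\gamma_1 \rightsquigarrow \gamma_3$. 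The only mild subtlety is the directedness of $K$, but this is precisely the gluing situation already treated for composition of directed homotopies.

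For the failure of symmetry, I would work in $X = I$ with $x = 0$, $y = 1$, and take the directed paths $\gamma_1 = \mathrm{id}_{[0,1]}$ and $\gamma_2(s) = s^2$, both continuous and monotone, hence in $P_I(0,1)$. Since $s^2 \le s$ on $[0,1]$, \cref{Interpolate Path Homotopy} (with the two paths in the opposite roles) supplies a directed path homotopy from $\gamma_2$ to $\gamma_1$, so $\gamma_2 \rightsquigarrow \gamma_1$. Conversely, suppose there were a directed path homotopy $H : I \times I \to I$ from $\gamma_1$ to $\gamma_2$. A directed map $I \times I \to I$ sends every pair of monotone maps $I \to I$ to a monotone map, and feeding it the straight-line path between two product-comparable points of $I \times I$ shows that such an $H$ is itself monotone for the product order; in particular $t \mapsto H(t,s)$ is non-decreasing for each fixed $s$. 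Then $\gamma_1(s) = H(0,s) \le H(1,s) = \gamma_2(s)$, i.e. $s \le s^2$ for all $s \in [0,1]$, which fails at $s = \tfrac12$. Hence $\gamma_1 \not\rightsquigarrow \gamma_2$, so $\rightsquigarrow$ is not symmetric.

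I expect the genuinely delicate point to be the choice of this asymmetric pair: one must avoid a merely incomparable pair (such as a constant loop and a once-around loop in $S^1_+$, for which no directed path homotopy exists in either direction, so that symmetry is not actually witnessed to fail), and instead exhibit two paths related in one direction but provably not in the other, which is why the monotonicity characterization of directed maps $I \times I \to I$ does the real work.
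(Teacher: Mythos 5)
Your proof is correct and follows essentially the same route as the paper: reflexivity via $H(t,s)=\gamma(s)$, transitivity via the $G\otimes H$ gluing with a check of the endpoint conditions, and failure of symmetry via a pointwise-ordered pair of monotone paths in $I$ where \cref{Interpolate Path Homotopy} gives one direction and monotonicity of $t\mapsto H(t,s)$ (forced by directedness) rules out the other. The only difference is the choice of witness pair ($\mathrm{id}$ and $s\mapsto s^2$ versus the paper's piecewise-linear pair), which is immaterial.
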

\begin{proof}
    In order to see that the relation is reflexive and transitive, we use the constructions from \cref{Homotopies}.
    If $\gamma \in P_X(x, y)$, then the directed homotopy given by $H(t, s) = \gamma(s)$
    satisfies the additional conditions of a path homotopy, so $\gamma \rightsquigarrow \gamma$.
    Similarly if $\gamma_1, \gamma_2, \gamma_3 \in P_X(x, y)$
    with $G$ a directed path homotopy from $\gamma_1$ to $\gamma_2$
    and $H$ a directed path homotopy from $\gamma_2$ to $\gamma_3$,
    then $G \otimes H$ is a directed homotopy from $\gamma_1$ to $\gamma_3$.
    It additionally holds that
    \begin{gather*}
        (G \otimes H)(t, 0) = \begin{cases}
            G(2t, 0), & t \le \tfrac{1}{2}, \\
            H(2t - 1, 0), & \tfrac{1}{2} < t.
        \end{cases} = 
        \begin{cases}
            x, & t \le \tfrac{1}{2}, \\
            x, & \tfrac{1}{2} < t
        \end{cases} = x,
    \end{gather*}
    and similarly that $(G \otimes H)(t, 1) = y$,
    so it is also a directed path homotopy and we find $\gamma_1 \rightsquigarrow \gamma_3$.
    A counterexample to symmetry is as follows:
    let $\gamma_1, \gamma_2 : I \to I$ be the paths given by
    \begin{gather*}
        \gamma_1(t) = \begin{cases}
            0, & t \le \tfrac{1}{2}, \\
            2t - 1, & \tfrac{1}{2} < t.
        \end{cases} \text{ \ \ and \ \ } \gamma_2(t) = \begin{cases}
            2t, & t \le \tfrac{1}{2}, \\
            1, & \tfrac{1}{2} < t.
        \end{cases}
    \end{gather*}
    We have that $\gamma_1 \rightsquigarrow \gamma_2$, because of \cref{Interpolate Path Homotopy}.
    On the other hand,
    if $H$ were a directed path homotopy from $\gamma_2$ to $\gamma_1$ we would require that
    $H(0, \tfrac{1}{2}) = \gamma_2(\tfrac{1}{2}) = 1$ and
    $H(1, \tfrac{1}{2}) = \gamma_1(\tfrac{1}{2}) = 0$, contradicting with directedness.
\end{proof}

In order get an equivalence relation on the set of directed paths between two points,
we will take the symmetric transitive closure of this relation.

\begin{definition}
    Let $X$ be a directed space and $x, y \in X$ two points.
    We say that two dipaths $\gamma_1, \gamma_2 \in P_X(x,y)$ are equivalent,
    or $\gamma_1 \simeq \gamma_2$,
    if there is an integer $n \geq 0$ together with dipaths $\beta_i \in P_X(x, y)$,
    for each $1 \le i \le n$, such that
    \begin{gather*}
      \gamma_1 \rightsquigarrow \beta_1 \leftsquigarrow \ldots \rightsquigarrow \beta_n \leftsquigarrow \gamma_2.
    \end{gather*}
\end{definition}
This alternating sequence of arrows is also called a zigzag.
As $\gamma_2 \leftsquigarrow \gamma_2$ holds for any path~$\gamma_2$ by reflexivity,
we can always assume that there is an odd number of paths in a zigzag
between two paths $\gamma_1$ and $\gamma_2$.
By taking $n = 0$,
it follows that $\gamma_1 \simeq \gamma_2$ holds if $\gamma_1 \rightsquigarrow \gamma_2$.
More precisely, $\simeq$ is the smallest equivalence relation on $P_X(x, y)$
such that that property holds \cite[p. 129]{leinster2016basic}.
As $\simeq$ is an equivalence relation,
we can talk about the equivalence classes of paths, denoted by~$[\gamma]$.
An important property of these equivalence classes is
that they are invariant under maps and path reparametrization.

\begin{lemma}\label{PathClassMap}
    Let $X, Y$ be directed spaces and $x, y \in X$.
    Let $\gamma_1, \gamma_2 \in P_X(x, y)$ and $f : X \to Y$ directed.
    If $\gamma_1 \simeq \gamma_2$, then $f \circ \gamma_1 \simeq f \circ \gamma_2$.
\end{lemma}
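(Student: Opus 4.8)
The plan is to reduce everything to the observation that post-composition with a directed map preserves directed path homotopies. First I would prove the one-step statement: if $\gamma_1 \rightsquigarrow \gamma_2$ for $\gamma_1,\gamma_2 \in P_X(x,y)$, then $f\circ\gamma_1 \rightsquigarrow f\circ\gamma_2$ in $P_Y(f(x),f(y))$. To see this, let $H : I\times I \to X$ be a directed path homotopy from $\gamma_1$ to $\gamma_2$. Since $f : X \to Y$ and $H : I\times I \to X$ are both directed maps, and directed maps compose (the composite is continuous and sends dipaths to dipaths), $f\circ H : I\times I \to Y$ is directed. It satisfies $(f\circ H)(0,s) = f(\gamma_1(s)) = (f\circ\gamma_1)(s)$ and $(f\circ H)(1,s) = (f\circ\gamma_2)(s)$, as well as $(f\circ H)(t,0) = f(H(t,0)) = f(x)$ and $(f\circ H)(t,1) = f(y)$, so it is a directed path homotopy from $f\circ\gamma_1$ to $f\circ\gamma_2$. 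Note also that $f\circ\gamma_1, f\circ\gamma_2 \in P_Y(f(x),f(y))$ by directedness of $f$, so the relation $\rightsquigarrow$ on $P_Y(f(x),f(y))$ indeed applies to them.

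For the general case, unfold $\gamma_1 \simeq \gamma_2$ into a zigzag: there are $n \ge 0$ and $\beta_1,\dots,\beta_n \in P_X(x,y)$ with
\[
  \gamma_1 \rightsquigarrow \beta_1 \leftsquigarrow \cdots \rightsquigarrow \beta_n \leftsquigarrow \gamma_2 .
\]
Since $\leftsquigarrow$ is simply $\rightsquigarrow$ read in the other direction, the one-step result applies to each link of the zigzag, yielding
\[
  f\circ\gamma_1 \rightsquigarrow f\circ\beta_1 \leftsquigarrow \cdots \rightsquigarrow f\circ\beta_n \leftsquigarrow f\circ\gamma_2 ,
\]
a zigzag in $P_Y(f(x),f(y))$. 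Hence $f\circ\gamma_1 \simeq f\circ\gamma_2$. Formally this is an induction on the length $n$ of the zigzag, with the one-step lemma providing both the base step ($n = 0$) and the inductive step.

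I expect the only real obstacle to be organisational rather than mathematical: making precise in the formalisation that the path $t\mapsto f(\gamma(t))$, obtained via \verb|γ.map f.continuous_to_fun|, interacts correctly with the data structures used for homotopies, so that $f\circ H$ is literally (definitionally, up to the available rewriting lemmas) the homotopy built from $f$ and $H$, and that the coercions between \verb|dihomotopy| and the underlying maps line up. The underlying content is just that "being a directed path homotopy" is closed under post-composition with a directed map, which is immediate from the composition of directed maps and from chasing the four defining equalities of a directed path homotopy.
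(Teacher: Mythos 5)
Your proposal is correct and follows essentially the same route as the paper: post-compose each directed path homotopy in the zigzag with $f$ to obtain the corresponding zigzag between $f\circ\gamma_1$ and $f\circ\gamma_2$. You merely spell out in more detail the verification that $f\circ H$ is again a directed path homotopy, which the paper leaves implicit.
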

\begin{proof}
    Let $n > 0$ odd and $\beta_i \in P_X(x, y)$ for $1 \le i \le n$ such that
    \begin{gather*}
      \gamma_1 \rightsquigarrow \beta_1 \leftsquigarrow \beta_2 \rightsquigarrow \ldots \rightsquigarrow \beta_n \leftsquigarrow \gamma_2.
    \end{gather*}
    If $H : I \times I \to X$ is a directed path homotopy from $\gamma_1$ to $\beta_1$,
    then $f \circ H$ is a directed path homotopy from $f \circ \gamma_1$ to $f \circ \beta_1$.
    We find that $f \circ \gamma_1 \rightsquigarrow f \circ \beta_1$.
    Repeating this for all other arrows in the zigzag gives us
    \begin{gather*}
      f \circ \gamma_1 \rightsquigarrow f \circ \beta_1 \leftsquigarrow f \circ \beta_2 \rightsquigarrow \ldots \rightsquigarrow f \circ \beta_n \leftsquigarrow f \circ \gamma_2,  
    \end{gather*}
    We conclude that $f \circ \gamma_1 \simeq f \circ \gamma_2$.
\end{proof}

\begin{lemma}\label{PathClassReparam}
    Let $X$ be a directed space and $x, y \in X$.
    Let $\gamma \in P_X(x, y)$ and $\varphi, \varphi' : I \to I$ continuous and monotone
    with $\varphi(0) = \varphi'(0) = 0$ and $\varphi(1) = \varphi'(1) = 1$.
    Then $\gamma \circ \varphi \simeq \gamma \circ \varphi'$.
\end{lemma}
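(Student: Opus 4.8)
The plan is to reduce the statement to the special case of the directed unit interval, where the explicit interpolation homotopy of \cref{Interpolate Path Homotopy} is available, and then transport the conclusion along $\gamma$ by means of \cref{PathClassMap}. The key observation enabling this is that a directed path $\gamma \in P_X(x,y)$ is in particular a directed map $\gamma : I \to X$, and that the hypotheses on $\varphi$ and $\varphi'$ (continuous, monotone, $\varphi(0)=\varphi'(0)=0$, $\varphi(1)=\varphi'(1)=1$) say exactly that $\varphi, \varphi' \in P_I(0,1)$, where $I$ is the directed unit interval of \cref{Directed unit interval}. So it suffices to prove $\varphi \simeq \varphi'$ in $P_I(0,1)$: once we have that, applying \cref{PathClassMap} with the directed map $\gamma : I \to X$ gives $\gamma \circ \varphi \simeq \gamma \circ \varphi'$ in $P_X(x,y)$, which is the claim.

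To prove $\varphi \simeq \varphi'$ in $P_I(0,1)$ I would introduce the pointwise minimum $\psi : I \to I$, $\psi(t) = \min\{\varphi(t), \varphi'(t)\}$. This $\psi$ is continuous as a minimum of continuous maps, and monotone: for $t \le s$ one has $\psi(t) \le \varphi(t) \le \varphi(s)$ and $\psi(t) \le \varphi'(t) \le \varphi'(s)$, hence $\psi(t) \le \psi(s)$; moreover $\psi(0)=0$ and $\psi(1)=1$, so $\psi \in P_I(0,1)$. Since $\psi(t) \le \varphi(t)$ for all $t \in I$, \cref{Interpolate Path Homotopy} yields a directed path homotopy from $\psi$ to $\varphi$, i.e.\ $\psi \rightsquigarrow \varphi$; likewise $\psi \le \varphi'$ pointwise gives $\psi \rightsquigarrow \varphi'$. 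Thus $\varphi \leftsquigarrow \psi \rightsquigarrow \varphi'$, and since $\simeq$ is (by the remark following its definition) the equivalence relation generated by $\rightsquigarrow$, symmetry and transitivity give $\varphi \simeq \psi \simeq \varphi'$, hence $\varphi \simeq \varphi'$.

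I do not expect a genuine obstacle here: the only points requiring care are recognising that $\gamma$ may be viewed as a directed map so that \cref{PathClassMap} applies, and checking that the pointwise minimum of two monotone endpoint-preserving reparametrizations is again such a reparametrization lying below both of them — after which \cref{Interpolate Path Homotopy} does all of the real work. As an alternative to the reduction, one could instead push the two homotopies $\psi \rightsquigarrow \varphi$ and $\psi \rightsquigarrow \varphi'$ forward along $\gamma$ directly, exactly as in the proof of \cref{PathClassMap}, obtaining $\gamma\circ\psi \rightsquigarrow \gamma\circ\varphi$ and $\gamma\circ\psi \rightsquigarrow \gamma\circ\varphi'$ and concluding in the same way; invoking \cref{PathClassMap} just packages this cleanly.
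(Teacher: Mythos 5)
Your proof is correct and follows essentially the same strategy as the paper: reduce to showing $\varphi \simeq \varphi'$ in $P_I(0,1)$ via \cref{PathClassMap}, then produce a zigzag using \cref{Interpolate Path Homotopy}. The only difference is cosmetic: the paper routes the zigzag through the two intermediate paths $\varphi \odot 0_1$ and $0_0 \odot \varphi'$ (three applications of the interpolation homotopy), whereas your pointwise minimum $\min(\varphi,\varphi')$ serves as a single common lower bound and needs only two.
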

\begin{proof}
    As $\gamma$ is a directed map from $I$ to $X$,
    it is enough by \cref{PathClassMap} to show that $\varphi \simeq \varphi'$.
    Let $\beta_1 = \varphi \odot 0_1$ and $\beta_2 = 0_0 \odot \varphi'$.
    Then, by applying \cref{Interpolate Path Homotopy} three times, we obtain the zigzag 
    \begin{gather*}
        \varphi \rightsquigarrow \beta_1 \leftsquigarrow \beta_2 \rightsquigarrow \varphi'.
    \end{gather*}
    This shows that $\varphi \simeq \varphi'$, completing the proof.
\end{proof}

In the next section, we will construct the fundamental category of a directed space.
For that we need the following four additional equalities.

\begin{lemma}\label{PathClassProps}
    Let $X$ be a directed space and $x, y, z, w \in X$.
    Let $\beta_1, \gamma_1 \in P_X(x, y)$, $\beta_2, \gamma_2 \in P_X(y, z)$
    and $\gamma_3 \in P_X(z, w)$ such that $\beta_1 \simeq \gamma_1$ and $\beta_2 \simeq \gamma_2$.
    Then the following holds:
    \begin{enumerate}
        \item $\beta_1 \odot \beta_2 \simeq \gamma_1 \odot \gamma_2$ \label{PathClassProps1}
        \item $0_x \odot \gamma_1 \simeq \gamma_1$
        \item $\gamma_1 \odot 0_y \simeq \gamma_1$
        \item $(\gamma_1 \odot \gamma_2) \odot \gamma_3 \simeq \gamma_1 \odot (\gamma_2 \odot \gamma_3)$
    \end{enumerate}
\end{lemma}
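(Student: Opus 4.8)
The plan is to handle part~\ref{PathClassProps1} by a congruence argument and the remaining three equalities by reducing them to \cref{PathClassReparam}. For part~\ref{PathClassProps1} the key observation is that the relation $\rightsquigarrow$ is compatible with $\odot$ in each argument separately. Indeed, if $\delta \rightsquigarrow \delta'$ in $P_X(x,y)$, witnessed by a directed path homotopy $G$, then $G \odot \mathrm{refl}_{\beta_2}$ (using the horizontal composite $G \odot H$ of path homotopies introduced just before this lemma, with $H$ the identity path homotopy on $\beta_2$) is a directed path homotopy from $\delta \odot \beta_2$ to $\delta' \odot \beta_2$, so $\delta \odot \beta_2 \rightsquigarrow \delta' \odot \beta_2$; symmetrically $\mathrm{refl}_{\gamma_1} \odot G$ shows that $\rightsquigarrow$ is preserved by $\gamma_1 \odot (-)$. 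Applying this to every arrow of a zigzag from $\beta_1$ to $\gamma_1$ yields $\beta_1 \odot \beta_2 \simeq \gamma_1 \odot \beta_2$, and applying it to every arrow of a zigzag from $\beta_2$ to $\gamma_2$ yields $\gamma_1 \odot \beta_2 \simeq \gamma_1 \odot \gamma_2$. Transitivity of $\simeq$ then gives $\beta_1 \odot \beta_2 \simeq \gamma_1 \odot \gamma_2$.

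For parts (2)--(4) I would observe that in each case both sides arise from a single dipath by precomposing with a continuous, monotone, endpoint-preserving self-map of $I$, so that \cref{PathClassReparam} applies verbatim. Concretely, $0_x \odot \gamma_1 = \gamma_1 \circ \varphi$ where $\varphi(t) = 0$ for $t \le \tfrac12$ and $\varphi(t) = 2t-1$ for $t \ge \tfrac12$, while $\gamma_1 = \gamma_1 \circ \mathrm{id}_I$; since $\varphi$ and $\mathrm{id}_I$ are both continuous, monotone, and fix $0$ and $1$, \cref{PathClassReparam} gives $0_x \odot \gamma_1 \simeq \gamma_1$. Symmetrically, $\gamma_1 \odot 0_y = \gamma_1 \circ \psi$ with $\psi(t) = 2t$ for $t \le \tfrac12$ and $\psi(t) = 1$ for $t \ge \tfrac12$, which settles part~(3). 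For part~(4), set $\delta = (\gamma_1 \odot \gamma_2) \odot \gamma_3$, which is a dipath by two uses of the concatenation axiom; then $\gamma_1 \odot (\gamma_2 \odot \gamma_3) = \delta \circ \varphi$, where $\varphi$ is the piecewise-linear homeomorphism of $I$ that maps $[0,\tfrac12]$, $[\tfrac12,\tfrac34]$, $[\tfrac34,1]$ affinely onto $[0,\tfrac14]$, $[\tfrac14,\tfrac12]$, $[\tfrac12,1]$ respectively, and $\delta = \delta \circ \mathrm{id}_I$; \cref{PathClassReparam} again closes the argument.

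The only genuinely delicate point is part~\ref{PathClassProps1}: one has to be careful that the auxiliary homotopy placed on the fixed side is the \emph{identity} path homotopy, so that $G \odot \mathrm{refl}_{\beta_2}$ really connects $\delta \odot \beta_2$ and $\delta' \odot \beta_2$ rather than some reparametrizations of them, and that a backward arrow $\delta' \leftsquigarrow \delta$ of the zigzag is transported to $\delta' \odot \beta_2 \leftsquigarrow \delta \odot \beta_2$ with the orientation unchanged. Everything else is routine bookkeeping with the explicit reparametrization formulas above, for which the formalization can reuse MathLib's lemmas that the relevant concatenation and reparametrization maps are continuous.
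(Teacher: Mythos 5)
Your proposal is correct and follows essentially the same route as the paper: part (1) via horizontal composition $G \odot H$ of a path homotopy with the identity path homotopy on the fixed factor, applied arrow-by-arrow to the two zigzags and combined by transitivity, and parts (2)--(4) as direct applications of \cref{PathClassReparam}. The only difference is that you spell out the explicit endpoint-preserving monotone reparametrizations, which the paper leaves implicit; these check out.
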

\begin{proof}
    Statements 2, 3 and 4 are direct applications of \cref{PathClassReparam}
    as they are all reparametrizations.
    We will now show statement 1.
    Let $n, m > 0$ odd and $p_i, q_j \in P_X(x, y)$ for $1 \le i \le n$ and $1 \le j \le m$ such that
    \begin{gather*}
        \beta_1 \rightsquigarrow p_1 \leftsquigarrow p_2 \rightsquigarrow \ldots \rightsquigarrow p_n \leftsquigarrow \gamma_1, \\
        \beta_2 \rightsquigarrow q_1 \leftsquigarrow q_2 \rightsquigarrow \ldots \rightsquigarrow q_m \leftsquigarrow \gamma_2.
    \end{gather*}
    Let $G$ be a directed path homotopy from $\beta_1$ to $p_1$
    and $H$ be the identity homotopy from $\beta_2$ to $\beta_2$.
    Then $G \odot H$ is a directed path homotopy from $\beta_1 \odot \beta_2$ to $p_1 \odot \beta_2$.
    Repeating this, we obtain a zigzag
    \begin{gather*}
        \beta_1 \odot \beta_2 \rightsquigarrow p_1 \odot \beta_2  \leftsquigarrow p_2 \odot \beta_2 \rightsquigarrow \ldots \rightsquigarrow p_n \odot \beta_2  \leftsquigarrow \gamma_1 \odot \beta_2,
    \end{gather*}
    so $\beta_1 \odot \beta_2 \simeq \gamma_1 \odot \beta_2$.
    Analogously we obtain a zigzag
    \begin{gather*}
        \gamma_1 \odot \beta_2 \rightsquigarrow \gamma_1 \odot q_1 \leftsquigarrow \gamma_1 \odot q_2 \rightsquigarrow \ldots \rightsquigarrow \gamma_1 \odot q_m \leftsquigarrow \gamma_1 \odot \gamma_2.
    \end{gather*}
    This results in $\gamma_1 \odot \beta_2 \simeq \gamma_1 \odot \gamma_2$
    and combining both equivalences gives us $\beta_1 \odot \beta_2 \simeq \gamma_1 \odot \gamma_2$.
\end{proof}

The definition of a directed path homotopy and the three lemmas above have all been been formalized
in \leanref{directed\_path\_homotopy.lean}.
For the path homotopies, we followed the more general approach from MathLib,
where we first defined directed homotopies that satisfy some property $P$.
Thereafter we defined \verb|dihomotopy_rel| as directed homotopies
that are fixed on a select subset of points.
This is all defined in \leanref{directed\_homotopy.lean}.
A path homotopy is a homotopy that is fixed on both endpoints, that is, on $\{0, 1\} \subseteq I$,
so we can define a directed path homotopy as

\vspace{3mm}
\begin{leancode}
abbreviation dihomotopy (p₀ p₁ : dipath x₀ x₁) :=
    directed_map.dihomotopy_rel p₀.to_directed_map p₁.to_directed_map {0, 1}
\end{leancode}
\vspace{3mm}

As a directed homotopy is defined between two directed maps,
we need to convert both paths \verb|p₀| and \verb|p₁| to directed maps.
The construction $\odot$ is called \verb|hcomp| and $\otimes$ is called \verb|trans|.
If $f, g \in D(I, I)$ are two directed maps with $f(t) \le g(t)$ for all $t \in I$,
the definition \verb|dihomotopy.reparam| constructs a homotopy from $\gamma \circ f$ to $\gamma \circ g$.
This is done by composing $\gamma$ and the homotopy obtained from \cref{Interpolate Path Homotopy}.
If $H$ is a homotopy from $\gamma_1$ to $\gamma_2$ with $\gamma_1, \gamma_2 \in P_X(x, y)$,
and $f : X \to Y$ is a directed map, then the homotopy from $f \circ \gamma_1$ to $f \circ \gamma_2$
given by $f \circ H$ is exactly what \verb|dihomotopy.map| entails.

Now we can formalize the relations $\rightsquigarrow$ and $\simeq$.
These are called \verb|pre_dihomotopic| and \verb|dihomotopic| respectively.

\vspace{3mm}
\begin{leancode}
    def pre_dihomotopic : Prop := nonempty (dihomotopy p₀ p₁)
    def dihomotopic : Prop := eqv_gen pre_dihomotopic p₀ p₁
\end{leancode}
\vspace{3mm}

The term nonempty means exactly that there exists some dihomotopy, which corresponds with our definition of $\rightsquigarrow$.
\verb|eqv_gen| gives the smallest equivalence relation generated by a relation, which is exactly what we want.
The lemmas \verb|map|, \verb|reparam| and \verb|hcomp| in the namespace \verb|dihomotopic| now correspond with
\cref{PathClassMap}, \cref{PathClassReparam} and the first point of \cref{PathClassProps} respectively.

This gives us enough tools to construct the so called fundamental category.

\section{Fundamental Structures}\label{Ch Fundamental Structures}

In this section, we define two structures that contain information about directed paths up to
deformation in a directed space: the fundamental category and the fundamental monoid.

\subsection{The Fundamental Category}

Using the properties found in \cref{Sch Path Homotopies},
we can define a category that captures the information of all paths up to
directed deformation in a directed space.
This is the directed version of the fundamental groupoid.

\begin{definition}[Fundamental Category]
    Let $X$ be a directed space.
    The fundamental category of $X$, denoted by $\overrightarrow\Pi(X)$,
    is a category that consists of:
\begin{itemize}
    \item Objects: points $x \in X$.
    \item Morphisms: $\overrightarrow\Pi(X)(x, y) = P_X(x, y)/\simeq$.
    \item Composition: $[\gamma_2] \circ [\gamma_1] = [\gamma_1 \odot \gamma_2]$.
    \item Identity: $\text{id}_x = [0_x]$.
\end{itemize}  
\end{definition}

\begin{remark}
    The fact that this category is well defined follows from \cref{PathClassProps}.
    Due to property 1, composition is well defined.
    Due to properties 2 and 3,
    the constant path behaves as an identity and property 4 gives us associativity.
\end{remark}

Note that $\overrightarrow\Pi$ maps objects in \textbf{dTop} to objects in \textbf{Cat}.
It turns out that it can also be defined on morphisms making it into a functor.
\begin{definition}
    Let $f : X \to Y$ be a directed map.
    We define
      $\overrightarrow\Pi(f) : \overrightarrow\Pi(X) \to \overrightarrow\Pi(Y)$
    as the functor:
    \begin{itemize}
        \item On objects: $\overrightarrow\Pi(f)(x) = f(x)$.
        \item On morphisms: $\overrightarrow\Pi(f)([\gamma]) = [f \circ \gamma]$.
    \end{itemize}
\end{definition}
It is well behaved on morphisms, because of \cref{PathClassMap}.
It is straightforward to verify that $\overrightarrow\Pi(f)$ respects composition and identities.

In our formalization, we follow the construction of the fundamental groupoid in MathLib found in 
\href{https://github.com/leanprover-community/mathlib/blob/2a0ce625db/src/algebraic_topology/fundamental_groupoid/basic.lean}{\texttt{algebraic\_topology/fundamental\_groupoid/basic.lean}}
closely.
The implementation is found in the file \leanref{fundamental\_category.lean}.
The MathLib version has some auxiliary definitions for a reparametrization that show that the two paths
$(\gamma_1 \odot \gamma_2) \odot \gamma_3$ and $\gamma_1 \odot (\gamma_2 \odot \gamma_3)$
are equal with relation to $\simeq$ for compatible paths $\gamma_1, \gamma_2$ and $\gamma_3$.
In order to use these in our directed world, we need to show that this reparametrization is monotone.
This is enough to then define the fundamental category.
\vspace{3mm}
\begin{leancode}
    def fundamental_category (X : Type u) := X

    ...

    instance : category_theory.category (fundamental_category X) :=
    {
        hom := λ x y, dipath.dihomotopic.quotient x y,
        id := λ x, ⟦ dipath.refl x ⟧,
        comp := λ x y z, dipath.dihomotopic.quotient.comp,
        id_comp' := λ x y f, quotient.induction_on f
        (λ a, show ⟦ (dipath.refl x).trans a ⟧ = ⟦ a ⟧,
              from quotient.sound (eqv_gen.rel _ _ ⟨dipath.dihomotopy.refl_trans a⟩)),
        comp_id' := /- Proof omitted -/,
        assoc' := /- Proof omitted -/,
    }
\end{leancode}
\vspace{3mm}
The first definition makes sure that objects of the fundamental category are terms of type $X$.
We then show that \verb|fundamental_category| is an instance of a category by defining
the morphisms (\verb|hom|), identities (\verb|id|) and composition (\verb|comp|).
The morphisms between two objects \verb|x| and \verb|y| are given by \verb|dipath.dihomotopic.quotient x y|.
This is the quotient of \verb|dipath x y| under the \verb|dihomotopic| relation
and its definition can be found in \leanref{directed\_path\_homotopy.lean}.
The identity on \verb|x| is then the equivalence class (denoted by ⟦ ⟧) of the constant path in \verb|x|.
The composition of the equivalence classes of two compatible paths is defined as the equivalence
class of the concatenation of the two paths in \verb|dipath.dihomotopic.quotient.comp|.

Proofs that the fundamental category is indeed a category are given
by \verb|id_comp'|, \verb|comp_id'| and \verb|assoc'|.
The first one, \verb|id_comp'|, requires us to show that the directed paths
\verb|(dipath.refl x).trans a| and \verb|a| are dihomotopic.
For this, we use \verb|dipath.dihomotopy.refl_trans a|,
which is an explicit directed path homotopy from the path \verb|(dipath.refl x).trans a| to \verb|a|.
Its existence shows that the two paths are
\verb|pre_dihomotopic| and they are thus in the same equivalence class.

The file also contains the definition of the $\overrightarrow\Pi$-functor from \verb|dTop| to \verb|Cat|.
Analogous to the undirected MathLib implementation, we use the notation \verb|dπ| for this functor.

\subsection{The Fundamental Monoid}\label{SS Fundamental Monoid}

There is also a directed version of the fundamental group of a topological space at a point.
In the fundamental group every equivalence class has an inverse,
obtained by the equivalence class of a reversed path.
As directed paths are not reversible in general,
we do not always have inverses in the directed case.
This leads to the use of a monoid instead of a group.

\begin{definition}[Fundamental monoid]
    Let $X$ be a directed space and $x \in X$ a point.
    Then the fundamental monoid of $X$ at $x$, denoted by $\overrightarrow\pi(X, x)$, is given by
    the monoid $(\overrightarrow\Pi(X)(x, x), \circ, \text{id}_x)$:
    its elements are endomorphisms of $x$ in $\overrightarrow\Pi(X)$, the operation is composition
    and the neutral element is $\text{id}_x$.
\end{definition}

\begin{example}\label{Fundamental Monoid Unit Interval}
    Let $x \in I$ be a point.
    Let $\gamma \in P_I(x, x)$.
    Then $\gamma$ is monotone and $\gamma(0) = \gamma(1) = x$.
    It follows that $\gamma(t) = x$ for all $t \in I$, so $\gamma = 0_x$.
    From this, we can conclude that the only morphism in $\overrightarrow{\Pi}(I)(x, x)$ is the identity,
    so $\overrightarrow{\pi}(I, x)$ is the trivial monoid.
\end{example}

\begin{example}\label{Fundamental Monoid Unit Circle}
    We have that $\overrightarrow{\pi}(S^1_+, 1) \cong (\mathbb{N}, +, 0)$.
    In \cref{Van Kampen Application} we will support this claim by calculating a fundamental monoid in a finite version of the directed unit circle.
\end{example}

Whether we give the unit interval the minimal directedness, the rightward directedness or maximal directedness, the fundamental monoid at any point is the trivial monoid.
Their fundamental categories, on the other hand, are able to distinguish the differences in directedness.
There are respectively zero, one and two morphisms between two different objects.
We see that the fundamental monoid loses information that is contained in the fundamental category.

\section{The Van Kampen Theorem}\label{Ch Van Kampen Theorem}

In this section, we will state and prove the Van Kampen Theorem.
We follow the proof of Grandis and work out some of the details that were omitted.
In \cref{Van Kampen Formalization} we show how we have formalized this proof
by comparing the proof to the Lean code.
We conclude with an application of the Van Kampen Theorem in \cref{Van Kampen Application}. 

\subsection{The Van Kampen Theorem}\label{Van Kampen Theorem}

Before we state and prove the theorem,
we will define the notion of being covered for directed homotopies.

\begin{definition}
    Let $X$ be a directed space and $\mathcal{U}$ a cover of $X$.
    Let $H : I \times I \to X$ be a directed homotopy and $n, m > 0$ two integers.
    We say that $H$ is $(n, m)$-covered (by $\mathcal{U}$)
    if for all $1 \le i \le n$ and $1 \le j \le m$ the image of
      $\left[ \tfrac{i-1}{n}, \tfrac{i}{n} \right] \times \left[ \tfrac{j-1}{m}, \tfrac{j}{m} \right]$
    under $H$ is contained in some $U \in \mathcal{U}$.
\end{definition}

Once again, by the Lebesgue Number Lemma, for any homotopy $H$ and open cover $\mathcal{U}$ of $X$,
there are $n, m > 0$ such that $H$ is $(n, m)$-covered by $\mathcal{U}$.

\begin{theorem}[Van Kampen Theorem]\label{Van Kampen Theorem Theorem}
    Let $X$ be a directed space and $X_1$ and $X_2$ two open subspaces
    such that $X = X_1 \cup X_2$ and let $X_0 = X_1 \cap X_2$.
    Let $i_k : X_0 \to X_i$ and $j_k : X_k \to X$ be the inclusion maps, $k \in \{1, 2\}$.
    Then we obtain a pushout square in \textbf{Cat}:
    \begin{center}
        \begin{tikzpicture}
        \node at (0,3) {$\vec\Pi(X_0)$};
        \node at (3,3) {$\vec\Pi(X_1)$};
        \node at (0,0) {$\vec\Pi(X_2)$};
        \node at (3,0) {$\vec\Pi(X)$};
        \draw[thick, ->] (0.7, 3) -- node[above=1mm] {$\vec\Pi(i_1)$} (2.45, 3);
        \draw[thick, ->] (0, 2.6) -- node[left=1mm]  {$\vec\Pi(i_2)$} (0, 0.4);
        \draw[thick, ->] (3, 2.6) -- node[right=1mm] {$\vec\Pi(j_1)$} (3, 0.4);
        \draw[thick, ->] (0.7, 0) -- node[below=1mm] {$\vec\Pi(j_2)$} (2.45, 0);
        \end{tikzpicture}
    \end{center}
\end{theorem}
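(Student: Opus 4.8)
The plan is to verify the universal property of the pushout in \textbf{Cat} directly. Fix a category $C$ together with functors $F_1\colon\vec\Pi(X_1)\to C$ and $F_2\colon\vec\Pi(X_2)\to C$ satisfying $F_1\circ\vec\Pi(i_1)=F_2\circ\vec\Pi(i_2)$; we must produce a unique functor $F\colon\vec\Pi(X)\to C$ with $F\circ\vec\Pi(j_1)=F_1$ and $F\circ\vec\Pi(j_2)=F_2$. On objects there is no freedom: since $X=X_1\cup X_2$, every point lies in some $X_k$ and $F(x)$ must be $F_k(x)$, which is consistent because $F_1$ and $F_2$ agree on $X_0=X_1\cap X_2$. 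Throughout I will use that each $X_k$ carries the directedness induced by $j_k$ (\cref{Induced Directed Spaces}): a dipath, or a directed path homotopy, in $X$ whose image lies in $X_k$ restricts to one in $X_k$, and $\vec\Pi(j_k)$ sends the class of the restriction back to the class of the original.

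To define $F$ on a morphism, given $\gamma\in P_X(x,y)$ I first invoke the Lebesgue Number Lemma, exactly as in the proof of \cref{dTop pushout}, to obtain $n>0$ with $\gamma$ being $n$-covered, say $\text{Im }\gamma_{i,n}\subseteq X_{k_i}$; each $\gamma_{i,n}$ is a dipath (a monotone subparametrization of $\gamma$) restricting to a dipath $\gamma_{i,n}^{k_i}$ in $X_{k_i}$, and I put
\begin{gather*}
  F([\gamma]) \;=\; F_{k_n}\bigl([\gamma_{n,n}^{k_n}]\bigr)\circ\cdots\circ F_{k_1}\bigl([\gamma_{1,n}^{k_1}]\bigr).
\end{gather*}
A priori this depends on $\gamma$ and on the chosen data; the substance of the proof is that it depends only on $[\gamma]$ and is functorial. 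I would organise this as follows. \textbf{(a)} If $\text{Im }\gamma\subseteq X_0$, the two candidate single-piece values $F_1([\gamma^1])$ and $F_2([\gamma^2])$ agree, since both equal $(F_1\circ\vec\Pi(i_1))([\gamma^0])=(F_2\circ\vec\Pi(i_2))([\gamma^0])$ for the restriction $\gamma^0$ to $X_0$; hence the value is independent of the choice of the indices $k_i$. \textbf{(b)} Subdividing a single piece $\gamma_{i,n}$ changes nothing: all sub-pieces lie in $X_{k_i}$, their concatenation is a reparametrization of $\gamma_{i,n}$ and hence represents the same morphism of $\vec\Pi(X_{k_i})$ by \cref{PathClassReparam}, and $F_{k_i}$ is a functor; passing to common refinements, the value is independent of $n$ and of the partition. \textbf{(c)} Consequently the value on a constant path is an identity and the value on a concatenation $\gamma\odot\delta$ is the composite of the values on $\gamma$ and $\delta$ (juxtapose covers and use (b)).

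The main obstacle is \textbf{(d)}: if $H\colon I\times I\to X$ is a directed path homotopy from $\gamma$ to $\gamma'$, then $F([\gamma])=F([\gamma'])$. Apply the Lebesgue Number Lemma to $H$ to get $p,q>0$ with $H$ being $(p,q)$-covered, the cell $\bigl[\tfrac{a-1}{p},\tfrac ap\bigr]\times\bigl[\tfrac{b-1}{q},\tfrac bq\bigr]$ mapping into some $X_{k_{ab}}$. Put $\sigma_a(s)=H(\tfrac ap,s)$, so $\sigma_0=\gamma$ and $\sigma_p=\gamma'$, and $w_{a,b}(t)=H(\tfrac{a-1+t}{p},\tfrac bq)$; then $(\sigma_{a-1})_{b,q}$, $(\sigma_a)_{b,q}$ and $w_{a,b}$ all lie in $X_{k_{ab}}$ (with $w_{a,0}$, $w_{a,q}$ constant, since $H$ fixes the endpoints), and $H$ restricted to that cell and reparametrised is a directed map $I\times I\to X_{k_{ab}}$. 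The key auxiliary fact is that in the product-directed square $I\times I$ the dipath running along the bottom edge and then up the right edge is $\simeq$ the one running up the left edge and then along the top: join both to the intermediate ``staircase'' dipath (the first coordinate of the former paired with the second of the latter) by homotopies each moving a single coordinate upwards, which are directed by the argument of \cref{Interpolate Path Homotopy}; composing with the cell map and using \cref{PathClassMap} then gives $(\sigma_{a-1})_{b,q}\odot w_{a,b}\simeq w_{a,b-1}\odot(\sigma_a)_{b,q}$ inside $X_{k_{ab}}$. Now run through the staircase paths $\rho_{a,c}$ from $x$ to $y$ that follow $\sigma_{a-1}$ for the first $c$ segments, then $w_{a,c}$, then $\sigma_a$ for the remaining segments: passing from $\rho_{a,c-1}$ to $\rho_{a,c}$ replaces the block $w_{a,c-1}\odot(\sigma_a)_{c,q}$ by $(\sigma_{a-1})_{c,q}\odot w_{a,c}$, which $F_{k_{ac}}$ sends to the same morphism of $C$ by the cell identity, so (a)–(c) yield $F([\rho_{a,c-1}])=F([\rho_{a,c}])$; since $\rho_{a,0}\simeq\sigma_a$ and $\rho_{a,q}\simeq\sigma_{a-1}$ this gives $F([\sigma_{a-1}])=F([\sigma_a])$, and chaining over $a$ gives $F([\gamma])=F([\gamma'])$.

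Since $\simeq$ is the equivalence relation generated by $\rightsquigarrow$, step (d) shows that $F$ descends to a well-defined function on each $\vec\Pi(X)(x,y)$; it is a functor by (c); it satisfies $F\circ\vec\Pi(j_k)=F_k$ by construction; and it is the unique such functor, because any functor $F'$ with these properties must send each $[\gamma_{i,n}^{k_i}]$ to $F_{k_i}([\gamma_{i,n}^{k_i}])$ and hence, being a functor, must send $[\gamma]$ to the composite displayed above. This establishes that the square is a pushout in \textbf{Cat}. I expect the bookkeeping in step (d) — in particular presenting the cell restrictions of $H$ as directed maps on $I\times I$ and verifying the auxiliary square identity — to be the most delicate part, both on paper and in the Lean formalisation.
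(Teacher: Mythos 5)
Your proof is correct, and its overall architecture --- fixing $F$ on objects, defining a path-level map $F'$ via a Lebesgue covering, checking independence of the indices $k_i$ and of the subdivision $n$, verifying the unit and composition laws, and finally descending to homotopy classes and reading off uniqueness --- coincides with the paper's. Where you genuinely diverge is in the homotopy-invariance step. The paper splits an $(n,m)$-covered path homotopy into vertical strips and inducts on $n$ to reduce to the $(1,m)$-covered case, then proves by induction on $m$ a more general side-paths identity $F'(\eta_0\odot\delta)=F'(\gamma\odot\eta_1)$ whose base case (a homotopy contained in a single $X_k$) is settled by padding $H$ with the degenerate homotopies $\Gamma_1(t,s)=\eta_0(\min(t,s))$ and $\Gamma_2(t,s)=\eta_1(\max(t,s))$ to obtain one explicit directed path homotopy $(\Gamma_1\odot H)\odot\Gamma_2$ inside $X_k$. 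You instead isolate a reusable lemma about the directed square $I\times I$ --- bottom-then-right $\simeq$ left-then-top, via a staircase intermediate and two interpolation homotopies as in \cref{Interpolate Path Homotopy} --- push it forward along each cell of the full $(p,q)$-grid using \cref{PathClassMap}, and chain the resulting exchange identities through the staircase paths $\rho_{a,c}$. Both mechanisms are sound: the paper concentrates the geometric content in one ad hoc homotopy and lets a double induction do the bookkeeping, while you concentrate it in a single universal statement about $I\times I$ (arguably cleaner to prove once, and closer in spirit to how one would formalise it) at the cost of more combinatorial bookkeeping with the staircases. Two presentational points, neither a gap: in step (d) you must phrase everything in terms of the path-level map $F'$ rather than $F$ on classes, since well-definedness on classes is precisely what is being established there; and the identifications of $F'(\rho_{a,0})$ with $F'(\sigma_a)$ and of $F'(\rho_{a,q})$ with $F'(\sigma_{a-1})$ should be justified directly by your point (c) together with the fact that $w_{a,0}$ and $w_{a,q}$ are constant, not by an appeal to $\simeq$, which would be circular at that stage.
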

\begin{proof}
    As $j_1 \circ i_1 = j_2 \circ i_2$ and $\vec\Pi$ is a functor, the square is commutative.
    It remains to show it satisfies the property of a pushout square.
    Let $\mathcal{C}$ be any category and $F_1 : \vec\Pi(X_1) \to \mathcal{C}$ and
    $F_2 : \vec\Pi(X_2) \to \mathcal{C}$ be two functors such that
        $F_1 \circ \vec\Pi(i_1) = F_2 \circ \vec\Pi(i_2)$. 
    We will explicitly construct a functor $F : \vec\Pi(X) \to \mathcal{C}$
    such that $F \circ \vec\Pi(j_1) = F_1$ and $F \circ \vec\Pi(j_2) = F_2$.
    The construction will show that this functor is necessarily unique with this property.

    \textbf{Step 1)}
    The objects of $\vec\Pi(X)$ are exactly the points of $X$.
    If an object $x \in \vec\Pi(X)$ is also contained in $\vec\Pi(X_1)$,
    it holds that $F(x) = F (j_1(x)) = (F \circ \vec\Pi(j_1))(x)$.
    The desired condition $F \circ \vec\Pi(j_1) = F_1$ then requires us to define $F(x) = F_1(x)$.
    A similar argument gives us that if $x \in \vec\Pi(X_2)$ then $F(x) = F_2(x)$.
    As $X_1$ and $X_2$ cover $X$, we have that for all $x \in \vec\Pi(X)$ that
    \begin{gather*}
        F(x) = \begin{cases}
            F_1(x), & x \in X_1, \\
            F_2(x), & x \in X_2.
        \end{cases}
    \end{gather*}
    By the property that $F_1 \circ \vec\Pi(i_1) = F_2 \circ \vec\Pi(i_2)$ this is well defined,
    so we know how $F$ must behave on objects.

    \textbf{Step 2)}
    Let $[\gamma] : x \to y$ be a morphism in $\vec\Pi(X)$.
    Then there is a $n > 0$ such that $\gamma$ is $n$-covered by the open cover $\{X_1, X_2\}$,
    with $\gamma_{i, n}$ covered by $X_{k_i}$, $k_i \in \{ 1, 2\}$.
    One important thing to note is that $\gamma_{i, n}$ can be both seen as a path in $X$ and
    as a path in $X_{k_i}$ by restricting its codomain.
    This matters when we talk about $[\gamma_{i, n}]$, as it could be a morphism in $\vec{\Pi}(X)$
    and in $\vec{\Pi}(X_{k_i})$.
    Within this proof will always consider it as a morphism in $\vec{\Pi}(X_{k_i})$ and
    use $[j_{k_i} \circ \gamma_{i, n}]$ for the morphism in $\vec{\Pi}(X)$.
    Note that we have that
      $[\gamma] = [j_{k_n} \circ \gamma_{n, n}] \circ \ldots  \circ [j_{k_1} \circ  \gamma_{1, n}]$
    in $\vec{\Pi}(X)$, as $\gamma$ is equal to
      $\gamma_{1, n} \odot (\gamma_{2, n} \odot \ldots (\gamma_{n-1, n} \odot \gamma_{n, n}))$
    up to reparametrization.
    Because we want $F$ to be a functor and thus to respect composition, we find that necessarily
    \begin{align*}
        F [\gamma] &= F ([j_{k_n} \circ \gamma_{n, n}] \circ \ldots  \circ [j_{k_1} \circ \gamma_{1, n}]) \\
                    &= F [j_{k_n} \circ \gamma_{n, n}] \circ \ldots  \circ F[j_{k_1} \circ \gamma_{1, n}] \\
                    &= F \left(\vec\Pi(j_{k_n}) [\gamma_{n, n}]\right) \circ \ldots  \circ F \left(\vec\Pi(j_{k_1})[\gamma_{1, n}]\right) \\
                    &= (F  \circ \vec\Pi(j_{k_n})) [\gamma_{n, n}] \circ \ldots  \circ (F  \circ \vec\Pi(j_{k_1})) [\gamma_{1, n}] \\
                    &= F_{k_n} [\gamma_{n, n}] \circ \ldots  \circ F_{k_1}[\gamma_{1, n}].
    \end{align*}
    As multiple choices were made, we need to make sure that $F$ is well defined this way.
    We do this by defining a map $F' : P_X \to \mathrm{Mor}(\mathcal{C})$,
    where $\mathrm{Mor}$ is the
    collection of all morphisms in $\mathcal{C}$.
    The map is given by
    \begin{gather*}
        F'(\gamma) = F_{k_n} [\gamma_{n, n}] \circ \ldots  \circ F_{k_1}[\gamma_{1, n}],
    \end{gather*}
    where $\gamma$ is $n$-covered with $\gamma_{i, n}$ covered by $X_{k_i}$.
    Firstly, we will show that this map is well defined.
    Secondly, we show that $F'$ respects equivalence classes.
    From this it follows that $F$ is well defined,
    as it is simply $F'$ descended on equivalence classes.

    \textbf{Step 3)}
    We first need to make sure that $F'$ does not depend on any choices of $k_i$.
    In the case that $\gamma_{i, n}$ is covered by both $X_1$ and $X_2$,
    the value of $k_i$ can be either 1 or 2.
    The condition that  $F_1 \circ \vec\Pi(i_1) = F_2 \circ \vec\Pi(i_2)$
    assures us that both options give us the same morphism.

    \textbf{Step 4)}
    The second choice we made is that of $n$.
    It is possible that $\gamma$ is also $m$-covered for another integer $m > 0$,
    with $\gamma_{j, m}$ being contained in $X_{p_j}$.
    We want to show that
    \begin{gather*}
        F_{k_n} [\gamma_{n, n}] \circ \ldots  \circ F_{k_1}[\gamma_{1, n}] =
        F_{p_m} [\gamma_{m, m}] \circ \ldots  \circ F_{p_1}[\gamma_{1, m}].
    \end{gather*}
    If we refine the partition of $\gamma$ in $n$ pieces into a partition of $mn$ pieces,
    that partition will surely also be partwise covered.
    Let $l_i \in \{1, 2\}$ for all $1 \le i \le mn$
    such that $\gamma_{i, mn}$ is covered by $X_{l_i}$.
    We now claim that for all $1 \le i \le n$ it holds that
      $F_{k_i}[\gamma_{i, n}] =
       F_{l_{mi}}[\gamma_{mi, mn}] \circ \ldots \circ F_{l_{m(i-1) + 1}}[\gamma_{m(i-1)+1, mn}]$.
    As $\gamma_{m(i-1) + k, mn}$ with $1 \le k \le m$ is a
    part of $\gamma_{i, n}$, we may assume that $l_{m(i-1) + k} = k_i$.
    This is because $F_1$ and $F_2$ agree on $X_1 \cap X_2$.
    As $F_{k_i}$ is a functor, the claim now follows because functors respect composition and because
    $\gamma_{i, n}$ is exactly the concatenation of all the smaller paths up to reparametrization.
    By a similar claim for $F_{p_j}[\gamma_{j, m}]$ we find:
    \begin{align*}
        F_{k_n} [\gamma_{n, n}] \circ \ldots  \circ F_{k_1}[\gamma_{1, n}]
            &= F_{l_{mn}}[\gamma_{mn, mn}] \circ \ldots \circ F_{l_1}[\gamma_{1, mn}] \\
            &= F_{p_m} [\gamma_{m, m}] \circ \ldots  \circ F_{p_1}[\gamma_{1, m}].
    \end{align*}
    We conclude that the definition is independent of the value of $n$.
    This makes $F'$ well defined.

    \textbf{Step 5)}
    Before we verify that $F'$ is independent of the choice of representative $\gamma$,
    we will first show that $F'$ satisfies two properties:
    \begin{gather}
        \forall x \in \vec\Pi(X) : F'(0_x) = \text{id}_{F(x)}. \label{Fprime property 1} \\
        \forall \gamma \in P_X(x, y), \delta \in P_X(y, z) : F'(\gamma \odot \delta) = F'(\delta) \circ F'(\gamma). \label{Fprime property 2}
    \end{gather}

    Let $x \in \vec\Pi(X)$ be given.
    If $x \in X_1$, then $0_x$ is already contained in $X_1$ and so by definition of $F'$
    we find $F' (0_x) = F_1 [0_x] = \text{id}_{F_1(x)} = \text{id}_{F(x)}.$
    Otherwise it holds that
      $x \in X_2$, so $F'(0_x) = F_2 [0_x] = \text{id}_{F_2(x)} = \text{id}_{F(x)}$.
    This proves \cref{Fprime property 1}.

    Let $\gamma \in P_X(x, y), \delta \in P_X(y, z)$ be two paths in $\vec\Pi(X)$.
    We can then find a $n$ such that both $\gamma$ and $\delta$ are $n$-covered,
    with $\gamma_{i, n}$ contained in $X_{k_i}$ and $\delta_{i, n}$ contained in $X_{p_i}$.
    It is then true that $\gamma \odot \delta$ is $2n$-covered as it holds that
    \begin{gather*}
        (\gamma \odot \delta)_{i, 2n} = \begin{cases}
            \gamma_{i, n}, & i \le n, \\
            \delta_{i - n, n}, & i > n. \\
        \end{cases} 
    \end{gather*}
    We find:
    \begin{gather*}
        F' (\delta \odot \gamma) = \\
        F_{p_{n}} [(\delta \odot \gamma)_{2n, 2n}] \circ \ldots \circ F_{p_1} [(\delta \odot \gamma)_{n+1, 2n}] \circ 
        F_{k_n} [(\delta \odot \gamma)_{n, 2n}] \circ \ldots \circ F_{k_1} [(\delta \odot \gamma)_{1, 2n}] = \\
        (F_{p_n} [\delta_{n, n}] \circ \ldots \circ F_{p_1} [\delta_{1, n}]) \circ 
        (F_{k_n} [\gamma_{n, n}] \circ \ldots \circ F_{k_1} [\gamma_{1, n}]) = F' (\delta) \circ F' (\gamma).
    \end{gather*}
    This shows that \cref{Fprime property 2} holds.

    \textbf{Step 6)}
    We will now show that $F'$ respects equivalence classes.
    Then it can descend to the quotient and it follows that $F$ is well defined.
    If $[\gamma] = [\delta]$ in $\vec{\Pi}(X)$ with $\delta$ another path from $x$ to $y$,
    we want that
    \begin{gather}\label{VKT ClassInpedence}
        F'(\gamma) = F'(\delta).
    \end{gather}
    Because of the way the equivalence class is defined,
    it is enough to show this for $\gamma$ and $\delta$ such that $\gamma \rightsquigarrow \delta$.
    Let in that case a directed path homotopy $H$ from $\gamma$ to $\delta$ be given.
    We take $n, m > 0$ such that $H$ is $(n, m)$-covered by $\{X_1, X_2\}$.
    Firstly assume that $n > 1$.
    Restricting $H$ to the rectangle
      $\left[0, \tfrac{1}{n}\right] \times \left[0, 1\right]$
    gives us a directed path homotopy $H_1$ from $\gamma$ to the directed path $\eta$ given by
      $\eta(t) = H\left(\tfrac{1}{n}, t\right)$.
    By restricting $H$ to the rectangle
      $\left[\tfrac{1}{n}, 1 \right] \times \left[0, 1\right]$
    we get a directed path homotopy $H_2$ from $\eta$ to $\delta$.
    It is clear that $H_1$ is $(1, m)$-covered and that $H_2$ is $(n-1, m)$-covered.
    By applying induction on $n$, we can conclude that it is enough to show that \cref{VKT ClassInpedence}
    holds for $(1, m)$-covered directed path homotopies,
    as we would obtain that $F'(\gamma) = F'(\eta) = F'(\delta)$.
    
    \textbf{Step 7)}
    We will prove the case where $H$ is $(1, m)$-covered by showing a more general statement:

    Let $H$ be any directed homotopy -- not necessarily a path homotopy -- from one path
    $\gamma \in P_X(x, y)$ to another path $\delta \in P_X(x', y')$ that is $(1, m)$-covered, $m > 0$.
    Let $\eta_0$ be the path given by $\eta_0(t) = H(t, 0)$
    and $\eta_1$ be given by $\eta_1(t) = H(t, 1)$.
    Then $F'(\eta_0 \odot \delta) = F'(\gamma \odot \eta_1)$.
    We do this by induction on $m$.

    In the case that $m = 1$, we have a homotopy contained in $X_1$ or $X_2$.
    Without loss of generality, we can assume it is contained in $X_1$.
    Let $\Gamma_1$ be the directed homotopy given by
    $\Gamma_1(t, s) = \eta_0(\text{min}(t, s))$ from $0_x$ to $\eta_0$.
    Let $\Gamma_2$ be the directed homotopy given by
    $\Gamma_2(t, s) = \eta_1(\text{max}(t, s))$ from $\eta_1$ to $0_{y'}$.
    We then can construct a directed path homotopy from $(0_x \odot \gamma) \odot \eta_1$
    to $(\eta_0 \odot \delta) \odot 0_{y'}$ given by $(\Gamma_1 \odot H) \odot \Gamma_2$:
    \begin{center}
        \begin{tikzpicture}
        \begin{scope}[very thick,decoration={
            markings,
            mark=at position 0.5 with {\arrow{>}}}
            ] 
            \draw[postaction={decorate}] (-6,0) -- node[below] {$0_x$} (-2,0);
            \draw[postaction={decorate}] (-2,0) -- node[below] {$\gamma$} ( 2,0);
            \draw[postaction={decorate}] ( 2,0) -- node[below] {$\eta_1$} ( 6,0);

            \draw[postaction={decorate}] (-6,4) -- node[above] {$\eta_0$} (-2,4);
            \draw[postaction={decorate}] (-2,4) -- node[above] {$\delta$} ( 2,4);
            \draw[postaction={decorate}] ( 2,4) -- node[above] {$0_{y'}$} ( 6,4);

            \draw[postaction={decorate}] (-6,0) -- node[left] {$0_{x}$} (-6,4);
            \draw[postaction={decorate}] (-2,0) -- node[left] {$\eta_0$} (-2,4);
            \draw[postaction={decorate}] ( 2,0) -- node[right] {$\eta_1$} ( 2,4);
            \draw[postaction={decorate}] ( 6,0) -- node[right] {$0_{y'}$} ( 6,4);
        \end{scope}
        \filldraw[black] (-6,0) circle (2pt) node[anchor=north]{$x$};
        \filldraw[black] (-2,0) circle (2pt) node[anchor=north]{$x$};
        \filldraw[black] (2,0) circle (2pt) node[anchor=north]{$y$};
        \filldraw[black] (6,0) circle (2pt) node[anchor=north]{$y'$};

        \filldraw[black] (-6,4) circle (2pt) node[anchor=south]{$x$};
        \filldraw[black] (-2,4) circle (2pt) node[anchor=south]{$x'$};
        \filldraw[black] (2,4) circle (2pt) node[anchor=south]{$y'$};
        \filldraw[black] (6,4) circle (2pt) node[anchor=south]{$y'$};

        \node at (-4, 2) {$\Gamma_1$};
        \node at (0, 2) {$H$};
        \node at (4, 2) {$\Gamma_2$};
        \end{tikzpicture}
    \end{center}
    It is a directed path homotopy because
    $\Gamma_1(t, 0) = \eta_0(\text{min}(t, 0)) = \eta_0(0) = x$ and
    $\Gamma_2(t, 1) = \eta_1(\text{max}(t, 1)) = \eta_1(1) = y'$ for all $t \in I$.
    As $\eta_0, \eta_1$ and $H$ are all contained in $X_1$,
    this directed path homotopy will be contained in $X_1$ as well.
    We find that $[\gamma \odot \eta_1] = [\eta_0 \odot \delta]$ in $\vec\Pi(X_1)$.
    This gives us that $F'(\gamma \odot \eta_1) =
      F_1 [\gamma \odot \eta_1] = F_1 [\eta_0 \odot \delta] = F'(\eta_0 \odot \delta)$.
    
    Let now $m > 1$ and assume the statement holds for $(1, m-1)$-covered homotopies.
    We can restrict $H$ to
      $[0,1] \times \left[0, \tfrac{m-1}{m}\right]$
    to obtain a $(1, m-1)$-covered homotopy $H_1$ and we can restrict $H$ to
      $[0,1] \times \left[\tfrac{m-1}{m}, 1\right]$
    to obtain a $(1, 1)$-covered homotopy $H_2$:
    \begin{center}
        \begin{tikzpicture}
        \begin{scope}[very thick,decoration={
            markings,
            mark=at position 0.5 with {\arrow{>}}}
            ] 
            \draw[postaction={decorate}] (-6,0) -- node[below] {$\gamma_1$} (2,0);
            \draw[postaction={decorate}] ( 2,0) -- node[below] {$\gamma_2$} ( 6,0);

            \draw[postaction={decorate}] (-6,4) -- node[above] {$\delta_1$} (2,4);
            \draw[postaction={decorate}] ( 2,4) -- node[above] {$\delta_2$} ( 6,4);

            \draw[postaction={decorate}] (-6,0) -- node[left] {$\eta_0$} (-6,4);
            \draw[postaction={decorate}] ( 2,0) -- node[right] {$\eta'$} ( 2,4);
            \draw[postaction={decorate}] ( 6,0) -- node[right] {$\eta_1$} ( 6,4);
        \end{scope}
        \filldraw[black] (-6,0) circle (2pt) node[anchor=north]{$x$};
        \filldraw[black] (2,0) circle (2pt) node[anchor=north]{$\gamma(\tfrac{m-1}{m})$};
        \filldraw[black] (6,0) circle (2pt) node[anchor=north]{$y$};
        
        \filldraw[black] (-6,4) circle (2pt) node[anchor=south]{$x'$};
        \filldraw[black] (2,4) circle (2pt) node[anchor=south]{$\delta(\tfrac{m-1}{m})$};
        \filldraw[black] (6,4) circle (2pt) node[anchor=south]{$y'$};

        \node at (-2, 2) {$H_1$};
        \node at (4, 2) {$H_2$};
        \end{tikzpicture}
    \end{center}
    Note that $F'(\gamma) = F'(\gamma_2) \circ F'(\gamma_1)$ by definition,
    because $\gamma_1$ is $(m-1)$-covered, $\gamma_2$ is $1$-covered and $\gamma$ is $m$-covered.
    Similarly it holds that $F'(\delta) = F'(\delta_2) \circ F'(\delta_1)$.
    We find:
    \begin{align*}
        F'(\gamma \odot \eta_1) &= F'(\eta_1) \circ F'(\gamma) \\
        &= F'(\eta_1) \circ (F'(\gamma_2) \circ F'(\gamma_1)) \\ 
        &= (F'(\eta_1) \circ F'(\gamma_2)) \circ F'(\gamma_1) \\
        &= (F'(\delta_2) \circ F'(\eta')) \circ F'(\gamma_1) &&\qquad\text{(Case } m = 1 \text{)} \\
        &= F'(\delta_2) \circ (F'(\eta') \circ F'(\gamma_1)) \\ 
        &= F'(\delta_2) \circ (F'(\delta_1) \circ F'(\eta_0)) &&\qquad\text{(Induction Hypothesis)} \\
        &= (F'(\delta_2) \circ F'(\delta_1)) \circ F'(\eta_0) \\
        &= F'(\delta) \circ F'(\eta_0) \\
        &= F'(\eta_0 \odot \delta).
    \end{align*}
    
    This proves the statement. From the statement we find that \cref{VKT ClassInpedence} holds: 
    \begin{gather*}
        F'(\delta) = F'(\delta) \circ \text{id}_x = F'(\delta) \circ F'(0_x) = F'(0_x \odot \delta) = \\
        F'(\gamma \odot 0_y) = F'(0_y) \circ F'(\gamma) = \text{id}_x \circ F'(\gamma) = F'(\gamma).
    \end{gather*}
    Here, the fourth equality follows from the statement.
    We conclude that $F$ is well defined.

    \textbf{Step 8)}
    As we have that $F[\gamma] = F'(\gamma)$,
    it is immediate that $F$ is a functor by \cref{Fprime property 1} and \cref{Fprime property 2}.
    The equalities $F \circ \vec\Pi(j_1) = F_1$ and
      $F \circ \vec\Pi(j_2) = F_2$ are by construction true:
    if $\gamma$ is covered by $X_1$, then $\gamma_{1, 1}$ is as well,
    so $(F \circ \vec\Pi(j_1))[\gamma] = F[\gamma] = F'(\gamma) = F_1[\gamma_{1,1}] = F_1[\gamma].$
    Here the second $[\gamma]$ is a morphism in $\vec{\Pi}(X)$ and the others are in $\vec{\Pi}(X_1)$.
    We conclude that the commutative square is indeed a pushout.
\end{proof}

We see that a space $X$ being covered by two open subspaces $X_1$ and $X_2$
is a sufficient condition for the conclusion of this theorem to hold.
The essence of the proof is that any directed path homotopy can be covered by a
grid of rectangles such that it maps each rectangle into either $X_1$ or $X_2$.
This holds as a consequence of the Lebesgue Number Lemma if $X_1$ and $X_2$ are open. 
Without much effort, that covering property can be shown to be also true
if $X = X_1^\circ \cup X_2^\circ$,
where $S^\circ$ is the topological interior of a subset $S \subseteq X$.
Therefore, the condition of the Van Kampen Theorem can be relaxed to the case where $X_1$ and $X_2$
are not necessarily open and $X = X_1^\circ \cup X_2^\circ$.
That is how the theorem is stated in the original work of Grandis \cite[p. 306]{grandis2003directed}.

It is however not necessary that \textit{every} directed path homotopy can be covered.
The fundamental category of $I$ is a pushout of the fundamental categories
  $[0, \tfrac{1}{2}]$ and $[\tfrac{1}{2}, 1]$,
but there are directed path homotopies that are not covered by rectangles.
Take, for example, the interpolation homotopy (see \cref{Interpolate Path Homotopy})
between $0_0 \odot \gamma$ and $\gamma \odot 0_1$, with $\gamma$ the identity map on $I$.
The key is that it is still possible to find \textit{some}
homotopy between these two paths that is $(n, m)$-covered.

The original Van Kampen Theorem, stated by Egbert van Kampen,
was concerned with the fundamental group of a space \cite{van1933connection}.
The fundamental group is the undirected version of the fundamental monoid.
The version of the theorem for fundamental groups only requires the additional condition
that each of $X$, $X_1$, $X_2$ and $X_1 \cap X_2$ is path connected.
The fundamental monoid of a directed space in a point, however,
is not guaranteed to form a pushout under the same conditions as the following example shows.

\begin{example}\label{Counterexample Fundamental Monoid}
    We take $X$ to be the directed unit circle,
    together with the horizontal diameter directed leftward. 
    Take $X_1$ to be the top semicircle together with the diameter
    and $X_2$ to be the bottom semicircle together with the diameter.
    Expand them both a little bit to make sure that they are open subspaces.
    \begin{center}
        \begin{tikzpicture}[scale=0.6, every node/.style={scale=0.6}]
        \begin{scope}[very thick,decoration={
            markings,
            mark=at position 0.5 with {\arrow{>}}}
            ] 
            \draw[postaction={decorate}] (2,0)--(-2,0);
            \draw[postaction={decorate}] (2, 0) arc (0:180:2 and 2);
            \draw[postaction={decorate}] (-2, 0) arc (180:360:2 and 2);
        \end{scope}
        \filldraw[black] (-2,0) circle (2pt) node[anchor=east]{$x$};
        \end{tikzpicture}
        \qquad
        \begin{tikzpicture}[scale=0.6, every node/.style={scale=0.6}]
        \useasboundingbox (-2.3,-2.3) rectangle (2.3,2.3);
        \begin{scope}[very thick,decoration={
            markings,
            mark=at position 0.5 with {\arrow{>}}}
            ] 
            \draw[postaction={decorate}] (2,0)--(-2,0);
            \draw[postaction={decorate}] (1.879, -0.684) arc (-20:200:2 and 2);
        \end{scope}
        \filldraw[black] (-2,0) circle (2pt) node[anchor=east]{$x$};
        \draw[black, very thick] (-1.854, -0.75) circle (2pt);
        \draw[black, very thick] (1.854, -0.75) circle (2pt);
        \end{tikzpicture}
        \qquad
        \begin{tikzpicture}[scale=0.6, every node/.style={scale=0.6}]
            \begin{scope}[very thick,decoration={
                markings,
                mark=at position 0.5 with {\arrow{>}}}
                ] 
                \draw[postaction={decorate}] (2,0)--(-2,0);
                \draw[postaction={decorate}] (-1.879, 0.684) arc (160:380:2 and 2);
            \end{scope}
            \filldraw[black] (-2,0) circle (2pt) node[anchor=east]{$x$};
            \draw[black, very thick] (-1.854, 0.75) circle (2pt);
            \draw[black, very thick] (1.854, 0.75) circle (2pt);
        \end{tikzpicture} \\
    The three spaces $X$, $X_1$ and $X_2$, from left to right.
    \end{center}
    In $X_1$, and therefore also in $X_1 \cap X_2$,
    the only path from $x$ to $x$ is the constant path,
    so it follows that $\vec{\pi}(X_1, x) \cong 0$ and $\vec{\pi}(X_1 \cap X_2, x) \cong 0$.
    Endomorphisms of $x$ in $X_2$ behave like endomorphisms of $1$ in $S^1_+$,
    so $\vec{\pi}(X_2, x) \cong (\mathbb{N}, +, 0)$.
    If $\vec{\pi}(X, x)$ were the pushout,
    we would find that $\vec{\pi}(X, x) \cong (\mathbb{N}, +, 0)$.
    This is false, as $\vec{\pi}(X, x) \cong (\mathbb{N}, +, 0) * (\mathbb{N}, +, 0)$
    --- the free product of $\mathbb{N}$ with itself.
    We will support this claim in \cref{Van Kampen Application}.
\end{example}

Under the right conditions, a Van Kampen type theorem for the fundamental monoid holds.
One such condition can be extracted from the work of Bubenik \cite{bubenik2009Models}.
Let $X_1$ and $X_2$ be two open subspaces covering the directed space $X$ with base point $x$,
both containing $x$.
It is now sufficient that for any endomorphism $[\gamma] : x \to x$
we can split it as $[\gamma] = [\gamma_n] \circ \ldots \circ [\gamma_1]$ with
$[\gamma_i] : x \to x$ and $\gamma_i$ contained in either $X_1$ or $X_2$.

In \cref{Counterexample Fundamental Monoid} this condition is not fulfilled
as one counterclockwise loop along the circle is first contained in $X_2$ and then in $X_1$.
There is no way to factor that loop into endomorphisms of~$x$ in such a way
that each morphism is contained in $\vec\Pi(X_1)$ or $\vec\Pi(X_2)$.

\subsection{Formalization}\label{Van Kampen Formalization}
In the formalization of \cref{Van Kampen Theorem Theorem}
we follow the constructive nature of its proof.
It can be found in \leanref{directed\_van\_kampen.lean}.
We have the following global variables,
corresponding with the assumptions of the Van Kampen Theorem: 
\vspace{3mm}
\begin{leancode}
    variables {X : dTop.{u}} {X₁ X₂ : set X}
    variables (hX : X₁ ∪ X₂ = set.univ)
    variables (X₁_open : is_open X₁) (X₂_open : is_open X₂)
\end{leancode}
\vspace{3mm}

Like in the proof,
we introduce a category $C$ and two functors $F_1 : \vec\Pi(X_1) \to C$ and $F_2 : \vec\Pi(X_2) \to C$.
Using these we are going to explicitly construct a functor
from $\vec\Pi(X)$ to $C$ and show that it is unique.
We will use that to prove that we indeed have a pushout square.

\vspace{3mm}
\begin{leancode}
    variables {C : category_theory.Cat.{u u}}
        (F₁ : (dπₓ (dTop.of X₁) ⟶ C))
        (F₂ : (dπₓ (dTop.of X₂) ⟶ C))
        (h_comm : (dπₘ i₁) ⋙ F₁ = ((dπₘ i₂) ⋙ F₂))
\end{leancode}
\vspace{3mm}

Here \verb|i₁| and \verb|i₂| are the inclusion maps as in the statement of the Van Kampen Theorem.
They are obtained by \verb|dTop.directed_subset_hom|, defined in
\leanref{dTop.lean}.
This defines the inclusion morphism $X_0 \to X_1$ in \textbf{dTop}
in the case that $X_0 \subseteq X_1 \subseteq X$.
In this case we have $X_0 = X_1 \cap X_2$. We first define the functor on objects (\textbf{Step 1}).

\vspace{3mm}
\begin{leancode}
    def functor_obj (x : dπₓ X) : C :=
        or.by_cases ((set.mem_union x X₁ X₂).mp (filter.mem_top.mpr hX x))
            (λ hx, F₁.obj ⟨x, hx⟩) (λ hx, F₂.obj ⟨x, hx⟩)    
\end{leancode}
\vspace{3mm}

Here we take an object $x$ from the fundamental category of $X$
and return an object from the codomain $C$.
We use \verb|filter.mem_top.mpr hX x| to show that $x \in X_1 \cup X_2$.
From this, we use \verb|set.mem_union| to obtain $x \in X_1$ or $x \in X_2$
and we can split by those cases to apply either $F_1$ or $F_2$.
We abbreviate \verb|functor_obj hX F₁ F₂| to \verb|F_obj| in our formalization to maintain clarity.
After this definition, there are two lemmas that prove that if $x \in X_k$,
then $F_k(x) = F(x)$ for $k \in \{1, 2\}$.

In the proof of \cref{Van Kampen Theorem Theorem},
$F'$ is first defined and it is then shown to be a valid definition.
Within our Lean formalization, we have to do these two parts in the reverse order.
Once we have shown that the construction is well-defined, we can define $F'$ in our formalization.
That is why \textbf{Step 2} will be completed later.

We use the definitions of \verb|covered| and \verb|covered_partwise|,
shown in \cref{Ch Directed Spaces},
to define the mapping of morphisms inductively (\textbf{Step 3}):
\vspace{3mm}
\begin{leancode}
    def functor_map_of_covered {γ : dipath x y} (hγ : covered γ hX) :
        F_obj x ⟶ F_obj y :=
        or.by_cases hγ
            (λ hγ, functor_map_aux_part_one hX h_comm hγ)
            (λ hγ, functor_map_aux_part_two hX h_comm hγ)

    def functor_map_of_covered_partwise {n : ℕ} : Π {x y : X} {γ : dipath x y}
            (hγ : covered_partwise hX γ n), F_obj x ⟶ F_obj y :=
        nat.rec_on n
          (λ x y γ hγ, F₀ hγ)
          (λ n ih x y γ hγ, (F₀ hγ.1) ≫ (ih hγ.2))
\end{leancode}
\vspace{3mm}

In \verb|functor_map_of_covered| we define what to do with a path $\gamma$ that is $1$-covered,
that is, we map it to $F_1([\gamma])$ or $F_2([\gamma])$ depending on whether
$\gamma$ is covered by $X_1$ or $X_2$.
It depends on \verb|functor_map_aux_part_one|, which specifies what $F_1([\gamma])$ should be.
We use \verb|F₀| to abbreviate \verb|functor_map_of_covered hX h_comm|.
We can then use this base case to define \verb|functor_map__of_covered_partwise|
for a $n$-covered path inductively.
If $n = 1$, we can apply \verb|F₀| by definition.
Otherwise, we are able apply \verb|F₀| to the first part of the path, which is $1$-covered,
and then \verb|functor_map_of_covered_partwise| to the second part, which is $(n-1)$-covered.
We abbreviate the construction \verb|functor_map_of_covered_partwise hX h_comm| to \verb|Fₙ|.

Since $n$ is an input for the definition, so we need to show
that it is independent of the choice for $n$.
The lemma \verb|functor_map_of_covered_partwise_unique| captures this independence (\textbf{Step 4}).

\vspace{3mm}
\begin{leancode}
    lemma functor_map_of_covered_partwise_unique {n m : ℕ} {γ : dipath x y}
    (hγ_n : covered_partwise hX γ n) (hγ_m : covered_partwise hX γ m) :
    Fₙ hγ_n = Fₙ hγ_m :=
        /- Proof omitted -/
\end{leancode}
\vspace{3mm}

This lemma makes use of the following lemma that shows that the image remains the same
if we refine the partition of $\gamma$,
so when we use a $nk$-covering instead of a $n$-covering.
\vspace{3mm} 
\begin{leancode}
    lemma functor_map_aux_of_covered_partwise_refine {n : ℕ} (k : ℕ) :
      Π {x y : X} {γ : dipath x y} (hγ_n : covered_partwise hX γ n),
      Fₙ hγ_n = Fₙ (covered_partwise_refine hX n k hγ_n) :=  
        /- Proof omitted -/
\end{leancode}
\vspace{3mm}

Now we know that the image is independent of $n$,
and because a $n > 0$ exists such that $\gamma$ is $n$-covered (shown in \verb|has_subpaths|),
we can choose one such $n$ and we obtain the following formalization of $F'$,
completing \textbf{Step 2}.
\vspace{3mm}
\begin{leancode}
    def functor_map_aux (γ : dipath x y) : F_obj x ⟶ F_obj y :=
      Fₙ (classical.some_spec (has_subpaths hX X₁_open X₂_open γ))
\end{leancode}
\vspace{3mm}

We have now formalized the $F'$ from the proof of the Van Kampen Theorem and
we first show that \cref{Fprime property 1} and \cref{Fprime property 2}
from \cref{Van Kampen Theorem Theorem} hold (\textbf{Step 5}).
\vspace{3mm}
\begin{leancode}
    lemma functor_map_aux_refl {x : X} :
      F_map_aux (dipath.refl x) = 𝟙 (F_obj x) :=
        /- Proof omitted -/

    lemma functor_map_aux_trans {x y z : X} (γ₁ : dipath x y) (γ₂ : dipath y z) :
      F_map_aux (γ₁.trans γ₂) = F_map_aux γ₁ ≫ F_map_aux γ₂ :=
        /- Proof omitted -/
\end{leancode}
\vspace{3mm}

We arrive at \textbf{Step 6} and want to show it is invariant under the \verb|dihomotopic| relation.
To do this we need to show the claim from the proof:
if we have a directed homotopy $H$ from $f$ to $g$ that is $(1, m)$-covered, then
$F'[H(\_, 1)] \circ F'[f] = F'[g] \circ F'[H(\_, 0)]$ (\textbf{Step 7}).
\vspace{3mm}
\begin{leancode}
    lemma functor_map_aux_of_homotopic_dimaps {m : ℕ} :
      Π {f g : D(I, X)} {H : directed_map.dihomotopy f g}
        (hcov : directed_map.dihomotopy.covered_partwise H hX 0 m),
        F_map_aux (dipath.of_directed_map f) ≫ F_map_aux (H.eval_at_right 1) =
        F_map_aux (H.eval_at_right 0) ≫ F_map_aux (dipath.of_directed_map g) :=
        /- Proof omitted -/
\end{leancode}
\vspace{3mm}

By using induction once again,
we end up with the lemma showing us that the choice of representative does not matter.
\vspace{3mm}
\begin{leancode}
    lemma functor_map_aux_of_dihomotopic (γ γ' : dipath x y) (h : γ.dihomotopic γ') :
        F_map_aux γ = F_map_aux γ' :=
        /- Proof omitted -/
\end{leancode}
\vspace{3mm}

We can now finally define the behavior on morphisms to obtain a functor
by using the universal mapping property of quotients.
\vspace{3mm}
\begin{leancode}
    def functor_map {x y : dπₓ X} (γ : x ⟶ y) : F_obj x ⟶ F_obj y :=
        quotient.lift_on γ F_map_aux
            (functor_map_aux_of_dihomotopic hX X₁_open X₂_open h_comm)

    ... /- Lemmas about identities and compositions -/

    def functor : (dπₓ X) ⟶ C := {
        obj := F_obj,
        map := λ x y, F_map,
        map_id' := λ x, functor_map_id hX X₁_open X₂_open h_comm x,
        map_comp' := λ x y z γ₁ γ₂, functor_map_comp hX X₁_open X₂_open h_comm γ₁ γ₂
    }
\end{leancode}
\vspace{3mm}

Here \verb|F_map| is an abbreviation for \verb|functor_map hX X₁_open X₂_open h_comm| and the
final \verb|functor| is abbreviated to simply \verb|F|.
Finally, we get to \textbf{Step 8}.
The remaining lemmas show that $F \circ \vec\Pi(j_k) = F_k$ for $k = 1$ and $k = 2$,
and that $F$ is the unique functor with this property.

\vspace{3mm}
\begin{leancode}
    lemma functor_comp_left : (dπₘ j₁) ⋙ F = F₁ := /- Proof omitted -/
    lemma functor_comp_right : (dπₘ j₂) ⋙ F = F₂ := /- Proof omitted -/
    lemma functor_uniq (F' : (dπₓ X) ⟶ C) (h₁ : (dπₘ j₁) ≫ F' = F₁)
        (h₂ : (dπₘ j₂) ≫ F' = F₂) :
        F' = F := /- Proof omitted -/
\end{leancode}
\vspace{3mm}

The Van Kampen Theorem is stated as
\vspace{3mm}
\begin{leancode}
    theorem directed_van_kampen {hX₁ : is_open X₁} {hX₂ : is_open X₂}
        {hX : X₁ ∪ X₂ = set.univ} :
        is_pushout (dπₘ i₁) (dπₘ i₂) (dπₘ j₁) (dπₘ j₂) :=
        /- Proof omitted -/  
\end{leancode}
\vspace{3mm}

The theorem \verb|directed_van_kampen| now follows easily from the lemmas above.

\subsection{Applications}\label{Van Kampen Application}
Topological spaces often have uncountably many points
and so their fundamental categories have uncountably many objects making them hard to reason about.
It is possible to reduce a fundamental category to a subset of
objects that captures the essence of the directedness.
This makes the fundamental category easier to work with and with the right selection of points,
a Van Kampen Theorem still holds.
This requires some more theory \cite{bubenik2009Models}.

Another way to keep it simple is by looking at spaces with finitely many points.
The first space we will look at is the so called \textit{discrete} or \textit{finite} unit circle.

\begin{example}\label{discrete unit circle}
    The discrete unit circle is a finite version of $S^1$
    and consists of four points $X = \{N, E, S, W\}$.
    $E$ and $W$ are the right and left point of the unit circle,
    and $N$ and $S$ represent the upper and lower arc of the circle.
    The topology is given by 
    \begin{gather*}
        \mathcal{T}_X =
         \{ \emptyset, \{ N \}, \{S \}, \{ N, S \}, \{ N, S, E \}, \{ N, S, W \}, \{ N, E, S, W\} \}
    \end{gather*}
    Let now four paths be given by 
    \begin{gather*}
        \gamma_1(t) = \begin{cases}
            E, & t = 0, \\
            N, & 0 < t.
        \end{cases},
        \gamma_2(t) = \begin{cases}
            N, & t < 1, \\
            W, & t = 1.
        \end{cases}, \\
        \gamma_3(t) = \begin{cases}
            W, & t = 0, \\
            S, & 0 < t.
        \end{cases},
        \gamma_4(t) = \begin{cases}
            S, & t < 1, \\
            E, & t = 1.
        \end{cases}.
    \end{gather*}
    Schematically we can represent the space and these paths as:
    \begin{center}
        \begin{tikzpicture}
        \begin{scope}[very thick,decoration={
            markings,
            mark=at position 0.5 with {\arrow{>}}}
            ] 
            \draw[postaction={decorate}] (2,0)  -- node[right=1mm] {$\gamma_1$} (0,2);
            \draw[postaction={decorate}] (0,2)  -- node[left=1mm] {$\gamma_2$} (-2,0);
            \draw[postaction={decorate}] (-2,0) -- node[left=1mm] {$\gamma_3$} (0,-2);
            \draw[postaction={decorate}] (0,-2) -- node[right=1mm] {$\gamma_4$} (2,0);
        \end{scope}
        \filldraw[black] (-2,0) circle (2pt) node[anchor=east]{$W$};
        \filldraw[black] (2,0) circle (2pt) node[anchor=west]{$E$};
        \draw[black, very thick] (0,2) circle (2pt) node[anchor=south]{$N$};
        \draw[black, very thick] (0,-2) circle (2pt) node[anchor=north]{$S$};
        \end{tikzpicture}
    \end{center}
    We now make $X$ directed by setting $P_X$ equal to the set containing
    any monotone subparametrization of any concatenation of the four paths.
    Note that any path in $X$ is now considered directed if and only if
    it visits points in $X$ in a counterclockwise order.
    This way we mimic the directedness of $S^1_+$,
    as each directed path in that space also runs counterclockwise.

    We can split $X$ into two open subspaces $X_1 = \{N, S, E\}$ and $X_2 = \{N, S, W \}$.
    The only directed paths up to monotone reparametrization in $X_1$ are
    $\gamma_2$, $\gamma_3$, $\gamma_2 \odot \gamma_3$ and the constant paths.
    Its fundamental category has thus three objects
    with their three identities and three additional morphisms :
    $[\gamma_2], [\gamma_3]$ and $[\gamma_3] \circ [\gamma_2]$.
    Symmetrically, the fundamental category $X_2$
    has also three objects and three non-trivial morphisms.
    The intersection of $X_1$ and $X_2$ is $\{N, S\}$.
    Any path from $N$ to $S$ must go through $W$ and any path from $S$ to $N$ must go through $E$,
    so the only possible paths in $X_1 \cap X_2$ are the constant (trivial) paths.

    As $X_1$ and $X_2$ are open, the condition of \cref{Van Kampen Theorem Theorem} is satisfied.
    It follows that $X$ is a pushout of $X_1$ and $X_2$.
    We now take the category $\mathcal{C}$ containing four points $\{n, e, s, w\}$
    and whose morphisms are freely generated 
    by the four morphisms $p_1 : e \to n$, $p_2 : n \to w$, $p_3 : w \to s$ and $p_4 : s \to e$.
    Freely generated means that morphisms in $\mathcal{C}$ are exactly compositions
    of these four morphisms (or identities) and two morphisms are the same if and only
    if their sequences of these base morphisms are the same up to identities.
    We claim that $\vec{\Pi}(X)$ is isomorphic to this category $\mathcal{C}$.
    We do this by showing that $\mathcal{C}$ is also
    the pushout of $\vec{\Pi}(X_1)$ and $\vec{\Pi}(X_2)$.
    As a pushout is unique up to isomorphism \cite{leinster2016basic}, the claim will follow.

    We take two functors $j_1 : \vec\Pi(X_1) \to \mathcal{C}$
    and $j_2 : \vec\Pi(X_2) \to \mathcal{C}$.
    $j_1$ is given on the objects by $j_1(N) = n$, $j_1(S) = s$ and $j_1(E) = e$.
    The non-trivial morphisms are mapped as $j_1([\gamma_1]) = p_1$,
    $j_1([\gamma_4]) = p_4$ and $j_1([\gamma_4 \odot \gamma_1]) = p_1 \circ p_4$.
    Similarly $j_2$ is defined.
    If it clear that $j_1$ and $j_2$ agree on $\vec\Pi(X_1 \cap X_2)$,
    so we have a commutative square.

    Let $\mathcal{D}$ be another category and $F_1 : \vec\Pi(X_1) \to \mathcal{D}$
    and $F_2 : \vec\Pi(X_2) \to \mathcal{D}$ two functors that agree on $\vec\Pi(X_1 \cap X_2)$.
    We then define $F : \mathcal{C} \to \mathcal{D}$ as $F(n) = F_1(N)$, $F(e) = F_1(E)$,
    $F(s) = F_1(S)$ and $F(w) = F_2(W)$.
    On morphisms, we take $F(p_1) = F_1([\gamma_1])$, $F(p_2) = F_2([\gamma_2])$,
    $F(p_3) = F_2([\gamma_3])$ and $F(p_4) = F_1([\gamma_4])$.
    As the morphisms in $\mathcal{C}$ are freely generated by $p_1, p_2, p_3$ and $p_4$,
    this defines $F$ uniquely and it holds that $F \circ j_k = F_k$ for $k \in \{1, 2\}$,
    so $\mathcal{C}$ is indeed a pushout of $\vec\Pi(X_1)$ and $\vec\Pi(X_2)$.
    From this it follows that the morphisms in $\vec\Pi(X)$
    are also freely generated by the four morphisms
    $[\gamma_1], [\gamma_2], [\gamma_3]$ and $[\gamma_4]$.

    We can now ask the question: what do endomorphisms of $E$ in $\vec{\Pi}(X)$ look like?
    By above characterization, they are of the form
        $([\gamma_4] \circ [\gamma_3] \circ [\gamma_2] \circ [\gamma_1])^n$
    with $n \geq 0$ and each one is different.
    We now obtain an explicit isomorphism from $\vec{\pi}(X, E)$ to $(\mathbb{N}, +, 0)$ given by
    $([\gamma_4] \circ [\gamma_3] \circ [\gamma_2] \circ [\gamma_1])^n \mapsto n$.
    This supports \cref{Fundamental Monoid Unit Circle},
    where we claimed that $\vec{\pi}(S^1_+, 1) \cong (\mathbb{N}, +, 0)$.
\end{example}

\begin{example}
    We will now look at a finite version of the space considered in \cref{Counterexample Fundamental Monoid}.
    This space will be similar to the discrete unit circle we just considered.
    We take $Y = \{N, E, S, W, M\}$ with the topology given $\mathcal{T}_Y$ by
    \begin{gather*}
            \{ \emptyset, \{ N \}, \{ M \}, \{ S \}, \{ N, M \}, \{ N, S \}, \{ M, S \}, \\
                \{ N, M, S \}, \{ N, M, S, E \}, \{ N, M, S, W \}, Y \}.
    \end{gather*}
    In this case $N$, $M$ and $S$ all represent open intervals.
    Note that the discrete unit circle is a subspace of $Y$ in a natural way.
    We take the four paths $\gamma_1, ..., \gamma_4$ equal to those in the previous example
    and take two additional paths
    \begin{gather*}
        \gamma_5(t) = \begin{cases}
            E, & t = 0, \\
            M, & 0 < t.
        \end{cases},
        \gamma_6(t) = \begin{cases}
            M, & t < 1, \\
            W, & t = 1
        \end{cases}.
    \end{gather*}
    We can represent this space and these paths as:
    \begin{center}
        \begin{tikzpicture}
        \begin{scope}[very thick,decoration={
            markings,
            mark=at position 0.5 with {\arrow{>}}}
            ] 
            \draw[postaction={decorate}] (2,0)  -- node[right=1mm] {$\gamma_1$} (0,2);
            \draw[postaction={decorate}] (0,2)  -- node[left=1mm] {$\gamma_2$} (-2,0);
            \draw[postaction={decorate}] (-2,0) -- node[left=1mm] {$\gamma_3$} (0,-2);
            \draw[postaction={decorate}] (0,-2) -- node[right=1mm] {$\gamma_4$} (2,0);
            \draw[postaction={decorate}] (2,0)  -- node[above] {$\gamma_5$} (0,0);
            \draw[postaction={decorate}] (0,0)  -- node[above] {$\gamma_6$} (-2,0);
        \end{scope}
        \filldraw[black] (-2,0) circle (2pt) node[anchor=east]{$W$};
        \filldraw[black] (2,0) circle (2pt) node[anchor=west]{$E$};
        \draw[black, very thick] (0,2) circle (2pt) node[anchor=south]{$N$};
        \draw[black, very thick] (0,-2) circle (2pt) node[anchor=north]{$S$};
        \draw[black, very thick] (0,0) circle (2pt) node[anchor=south]{$M$};
        \end{tikzpicture}
    \end{center}
    We take the open subspaces $Y_1 = \{ N, M, S, E \}$ and $Y_2 = \{ N, M, S, W \}$.
    The non-trivial morphisms in $\vec{\Pi}(Y_1)$ are $[\gamma_1], [\gamma_4],
        [\gamma_5], [\gamma_5] \circ [\gamma_4]$ and $[\gamma_1] \circ [\gamma_5]$. 
    In $\vec{\Pi}(Y_2)$ they are $[\gamma_2], [\gamma_3], [\gamma_6], 
        [\gamma_3] \circ [\gamma_2]$ and $[\gamma_3] \circ [\gamma_6]$.
    $\vec{\Pi}(Y_1 \cap Y_2)$ only consists of three points
    $N, M$ and $S$ together with their identities.
    Just like in the previous example,
    \cref{Van Kampen Theorem Theorem} tells us that the morphisms in $\vec{\Pi}(Y)$ are
    freely generated by the six morphisms $[\gamma_i]$, $1 \le i \le 6$.

    This time we are interested in the non-trivial endomorphisms of $W$.
    If we start in point $W$,
    first the morphism $[\gamma_4] \circ [\gamma_3]$ must be followed to reach point $E$.
    From there we have a choice to return to $W$:
    either $[\gamma_2] \circ [\gamma_1]$ or $[\gamma_6] \circ [\gamma_5]$.
    We find that endomorphisms of $W$ are sequences of two loops
    $[\gamma_2] \circ [\gamma_1] \circ [\gamma_4] \circ [\gamma_3]$ and 
    $[\gamma_6] \circ [\gamma_5] \circ [\gamma_4] \circ [\gamma_3]$.
    This structure is isomorphic to $\mathbb{N} * \mathbb{N}$,
    agreeing with \cref{Counterexample Fundamental Monoid}.
\end{example}

\section{Conclusion and Further Research}\label{Ch Conclusion}
 
In this article, we presented important concepts from directed topology.
These allowed us to state and prove a directed version of the Van Kampen Theorem.
If its simple conditions are satisfied,
it allows us to calculate the fundamental category of a directed space
using the fundamental categories of subspaces.
We showcased how we formalized that theorem
and the theory leading up to it using the Lean proof assistant. \medbreak

Our formalization can be extended to other concepts and theorems from directed topology.
For example, Bubenik's approach of restricting the fundamental category to a
(finite) full subcategory is a clear extension of the theory we have formalized in Lean \cite{bubenik2009Models}.

At the moment, MathLib does not have a version of the Van Kampen Theorem for groupoids,
originally proven by Brown in 1968 \cite{brown1968Topology,brown2006Topology}.
As Grandis based his proof of the directed version on Brown's version,
our formalization can conversely be adapted to suit the undirected case.
The undirected version could also be shown to be a corollary of the directed version.
This is because the fundamental category of a topological space equipped
with the maximal directedness coincides with the fundamental groupoid of the topological space.

It might also prove interesting to further investigate sufficient and necessary preconditions
for the Van Kampen Theorem, as stated in \cref{Van Kampen Theorem}.
As the proof of the Van Kampen Theorem suggests,
we want to be able to cover both paths and homotopies,
but whether that is truly necessary is something that future research will have to tell.

\emergencystretch=1em
\printbibliography{}

\end{document}